\documentclass[11pt]{article}

\usepackage{amsmath,amssymb,amsthm}
\usepackage{mathtools}
\usepackage{bm}
\usepackage{geometry}
\usepackage{graphicx}
\usepackage{hyperref}
\usepackage{natbib}
\usepackage{booktabs}
\usepackage{placeins}
\allowdisplaybreaks

\title{Fast Stochastic Nearest--Neighbor Pairwise Composite Likelihood for Massive Spatial Datasets}

\author{
Moreno Bevilacqua\textsuperscript{1,2}\thanks{Corresponding author. Email: \texttt{moreno.bevilacqua@uai.cl}},
Francisco Cuevas-Pacheco\textsuperscript{3},
Christian Caama\~no-Carrillo\textsuperscript{4}
\\[0.5em]
\textsuperscript{1}Universidad Adolfo Ib\'a\~nez, Chile\\
\textsuperscript{2}Ca' Foscari University of Venice, Italy\\
\textsuperscript{3}Universidad T\'ecnica Federico Santa Mar\'ia, Chile\\
\textsuperscript{4}Universidad del B\'io-B\'io, Chile
}

\date{}
\newtheorem{proposition}{Proposition}

\newcommand{\R}{\mathbb{R}}
\newcommand{\EE}{\mathbb{E}}
\newcommand{\PP}{\mathbb{P}}
\newcommand{\Var}{\mathrm{Var}}
\newcommand{\Cov}{\mathrm{Cov}}

\newcommand{\argmax}{\mathop{\mathrm{argmax}}}

\begin{document}
\maketitle

\begin{abstract}
Weighted pairwise composite likelihoods based on nearest-neighbor (NN) pairs
provide a scalable alternative to full likelihood inference for spatial random
fields, but can remain expensive when moderately large NN neighborhoods are
needed. We propose a stochastic acceleration that constructs the deterministic
NN candidate graph and evaluates only a randomized subset of its pairwise
contributions.

We consider two thinning designs: Bernoulli thinning, which controls the
retained-pair budget in expectation, and fixed-budget thinning, which enforces
an exact budget through target-wise sampling without replacement. Simulation
studies for Mat\'ern covariance models suggest that, in the settings
considered, retaining two pairs per observation provides a stable
statistical--computational compromise. The stochastic NN pairwise estimators
provide a faster alternative to a Vecchia-type Gaussian approximation when
substantial reductions in covariance-fitting time are desired and a modest loss
of efficiency is acceptable. This trade-off is especially favorable for the mean, scale, and
sill parameters, while the main efficiency loss is concentrated on smoothness
estimation.

In an application to July average temperature over the western--central United
States, based on 2.5 million WorldClim observations, the proposed estimators
achieve predictive accuracy essentially indistinguishable from the Vecchia
benchmark, with substantially shorter covariance-fitting time.
\end{abstract}

\section{Introduction}

Gaussian random fields are a fundamental tool for modelling spatial and
spatio-temporal dependence in environmental sciences, climate studies,
geostatistics, epidemiology, and remote sensing. Likelihood-based inference is
particularly attractive because it allows joint estimation of mean and
covariance parameters, including variance, range, and smoothness. However,
modern datasets often contain tens of thousands to millions of observations.
For Gaussian random fields, exact likelihood evaluation requires
\(\mathcal{O}(n^3)\) operations and \(\mathcal{O}(n^2)\) memory storage, making
full maximum likelihood inference infeasible even when \(n\) is only moderately
large.

This computational bottleneck has motivated a large literature on scalable
methods for likelihood-based inference with spatial Gaussian random fields.
Important approaches include low-rank covariance representations
\citep{Banerjee:2008,cress+j08}, covariance tapering
\citep{fur+g+n06,kauff}, Gaussian Markov random field and SPDE
approximations \citep{llinj}, multiresolution approximations
\citep{Katzfuss2017,katzfuss2020class}, and Vecchia-type approximations
\citep{vecchia,guigui,STS755}. For a broad review and comparison of scalable
methods for massive spatial datasets, see \citet{Heatonetal:2019}.

Composite likelihood methods \citep{lindsay88,varin+r+f11} provide a widely
used alternative to full likelihood inference by replacing the full likelihood
with a product of low-dimensional marginal or conditional likelihood
contributions. In spatial statistics, weighted pairwise composite likelihoods
\citep{Bevilacqua+g+m+p12,Bevilacqua:Gaetan:2015} are appealing because they
retain information on spatial dependence while avoiding the computational
burden of the full joint likelihood. They are also useful for complex
non-Gaussian random fields, for which the full finite-dimensional distribution
may be unavailable or computationally intractable, whereas bivariate
distributions can often be specified or evaluated
\citep{Heagerty:Lele:1998,Bevilacqua_et_al:2021,poi22}. Nevertheless, without
further restriction, pairwise composite likelihoods involve \(\mathcal{O}(n^2)\)
pairs and therefore remain expensive for massive datasets.

A statistically and computationally convenient way to reduce this cost is to
use weights based on a nearest-neighbor (NN) graph. In the NN weighted pairwise
likelihood of \citet{NNP}, only pairs associated with the \(m\) nearest
neighbors of each location are retained. For fixed \(m\), this reduces the
cost of evaluating the composite-likelihood objective function to
\(\mathcal{O}(nm)\) and the memory requirement to \(\mathcal{O}(n)\). This
deterministic NN weighted pairwise likelihood is the starting point of the
present paper. However, for massive datasets, the number of bivariate
likelihood evaluations can still be substantial when moderate or large values
of \(m\) are needed. This situation may arise, for example, when estimating
the smoothness parameter of flexible correlation models, such as the Mat\'ern
\citep{porcumatern} or Generalized Wendland \citep{ana2019} models.

The main goal of this paper is to reduce the computational cost of NN weighted
pairwise composite likelihoods. We propose a stochastic acceleration that first
constructs the deterministic NN candidate graph and then evaluates only a
randomized subset of its pairwise contributions. This separates two roles that
are coupled in the deterministic NN pairwise likelihood: the number of nearest
neighbors \(m\) controls the richness of the local candidate graph, whereas the
thinning parameter \(p\in(0,1]\) controls the retained-pair budget. Conditional
on the precomputed NN lists, the number of bivariate likelihood evaluations is
reduced from \(\mathcal{O}(nm)\) to approximately \(\mathcal{O}(pnm)\). Thus,
one can use richer NN candidate graphs while keeping the number of evaluated
pairwise likelihood terms computationally affordable.

The idea is related to recent work on stochastic composite likelihoods
\citep{Mazo:2024,Alfonzetti:2025}, where only a random subset of composite
contributions is evaluated in order to control computational cost. While those
contributions focus on settings with independent replicates, the present paper
considers a single realization of a spatial random field, where the pairwise
likelihood contributions are spatially dependent. In addition, the
randomization is performed within a deterministic spatial NN graph, rather than
over an unstructured collection of composite contributions, leading to a
randomized version of a local spatial estimating criterion.

The contribution of this paper is threefold. First, we formulate stochastic NN
pairwise likelihood as a two-stage procedure that separates the construction of
a deterministic NN candidate graph from the randomized selection of the pairs
actually evaluated. The neighborhood size \(m\) controls the richness of the
candidate graph, whereas the thinning parameter controls the computational
budget. Second, we study two thinning designs on this graph: calibrated
independent Bernoulli thinning, which controls the retained-pair budget in
expectation, and fixed-budget thinning, which fixes the total number of
evaluated pairs exactly through target-wise sampling without replacement.
Third, we provide a systematic empirical study, up to spatial datasets with
millions of observations, of the trade-off between retained-pair budget,
computing time, and loss of efficiency for each Mat\'ern parameter, using a
Vecchia-type approximation as an external benchmark.

The simulation studies suggest that retaining two pairs per observation
provides a stable statistical--computational compromise in the Mat\'ern
settings considered. This value should be interpreted as an empirical
calibration for the present simulation designs, rather than as a universal
recommendation. At this budget, the stochastic NN pairwise estimators remain
close to the deterministic FullNN estimator for the mean, scale, and sill
parameters, while the main loss of efficiency is concentrated on the Mat\'ern
smoothness parameter. Compared with the Vecchia-type benchmark, the proposed
estimators are substantially faster and retain comparable accuracy for the
mean, scale, and sill parameters, whereas the main efficiency loss is again
concentrated on smoothness estimation, especially in the smoother Mat\'ern
scenarios. Additional simulations show that the internal Monte Carlo
variability due to thinning is negligible for the mean and sill, moderate for
the scale, and largest for smoothness.

The two thinning designs provide different forms of budget control. Bernoulli
thinning is simpler and controls the retained size only in expectation, whereas
fixed-budget thinning enforces the retained-pair budget exactly and allocates it
across target-specific NN lists. The simulations show that this additional
control reduces the internal thinning variability for the mean and sill
parameters, but does not imply uniform RMSE dominance over Bernoulli thinning.

The large-scale temperature application further shows that the proposed
stochastic NN pairwise fits can be used on multi-million-observation datasets
and can achieve predictive accuracy essentially indistinguishable from a
Vecchia benchmark, with substantially shorter covariance-fitting times.

The inferential properties of the resulting estimators are discussed under
increasing-domain weak-dependence conditions. Since direct estimation of the
Godambe variance can be difficult, we use a parametric score bootstrap to
compute standard errors. The computational savings from thinning, together with
fast simulation methods (see \citealp{newfast} and the references therein),
make simulation-based estimation of the Godambe variance feasible for massive
spatial datasets. The proposed methods are implemented in the \texttt{GeoModels}
package for \textsf{R} \citep{GeoModels}.

Although the paper focuses on Gaussian spatial random fields, the proposed
randomization is defined at the level of the candidate pair graph and of the
associated pairwise likelihood contributions. The same construction can
therefore be adapted to other pairwise composite likelihood settings, including
non-Gaussian, space--time, and multivariate random fields, whenever the relevant
bivariate likelihood contributions can be evaluated.

The remainder of the paper is organized as follows. Section~\ref{sec:wpl}
reviews NN weighted pairwise composite likelihood. Section~\ref{sec:two_designs}
introduces the two stochastic thinning designs. Section~\ref{sec:inference}
discusses standard errors and the parametric bootstrap.
Sections~\ref{sec:simulation} and~\ref{sec:application} present simulation and
data examples. Section~\ref{sec:conclusion} concludes.
Appendix~\ref{app:asymptotics} gives asymptotic statements and discusses
Godambe variability.

\section{Weighted composite likelihood based on nearest--neighbor pairs}
\label{sec:wpl}

Let
\[
Z=\{Z(\mathbf{s}),\mathbf{s}\in A\subset\R^d\}
\]
be a spatial random field observed at locations
\(\mathbf{s}_1,\ldots,\mathbf{s}_n\). The distribution of the field is indexed
by a parameter vector \(\bm\theta\in\Theta\subset\R^q\), which may include mean
parameters, marginal variance parameters, nugget parameters, and parameters
governing the spatial correlation function.

Given two observation sites \(\mathbf{s}_i\) and \(\mathbf{s}_j\), denote by
\[
\mathbf{Z}_{ij}=\{Z(\mathbf{s}_i),Z(\mathbf{s}_j)\}^{\top}
\]
the corresponding bivariate random vector, with joint density
\(f_{ij}(\mathbf{z}_{ij};\bm\theta)\). The weighted pairwise composite log-likelihood can be based either on
bivariate marginal contributions or on pairwise conditional contributions. In
this paper we focus on the bivariate marginal version, which is the one used in
the simulation studies and in the application. We therefore write
\[
\ell(\mathbf Z_{ij};\bm\theta)
=
\log f_{ij}(\mathbf Z_{ij};\bm\theta)
\]
and define
\begin{equation}
\label{eq:wcl}
wpl(\bm\theta)=
\sum_{i=1}^n\sum_{j\neq i}
\ell(\mathbf{Z}_{ij};\bm\theta) w_{ij}.
\end{equation}
The role of the non-negative weights \(w_{ij}\) is to save computational time
and improve the statistical efficiency. Compactly supported weights have been
shown to accomplish this task \citep{Bevilacqua:Gaetan:2015,dddavis}.

\cite{NNP} proposed compactly supported weights based on NN.
Let \(N_m(\mathbf{s}_j)\) denote the set of the \(m\) nearest neighbors of
\(\mathbf{s}_j\). Directed nearest-neighbor weights are defined as
\begin{equation}
\label{eq:nnweights}
w^{NN}_{ij}(m)=
\begin{cases}
1, & \mathbf{s}_i\in N_m(\mathbf{s}_j),\\
0, & \text{otherwise},
\end{cases}
\qquad i,j\in\{1,\ldots,n\}.
\end{equation}
Here \(j\) is the target location and \(i\) is one of its nearest neighbors.
The NN weights are directed and, in general, non-symmetric. Therefore, an
unordered pair \(\{i,j\}\) may enter the directed NN composite likelihood once,
twice, or not at all.

This NN construction is computationally convenient because kd-tree algorithms,
such as those of \citet{Arya_etal1998}, allow efficient construction of
nearest-neighbor sets. Once the NN lists are available, evaluation and
maximization of \(wpl(\bm\theta)\) can be performed by summing only over NN
pairs. Thus, for fixed \(m\), the deterministic NN weighted composite
likelihood requires \(\mathcal{O}(nm)\) bivariate likelihood evaluations and
\(\mathcal{O}(n)\) memory storage \citep{NNP}.

The number of nearest neighbors \(m\) controls both the computational cost and
the amount of pairwise dependence information entering the estimating
criterion. Increasing \(m\) enlarges the set of non-zero NN weights and can add
information from moderate spatial lags. However, efficiency is not necessarily
monotone in \(m\): additional pairs may be weakly informative, redundant with
nearer pairs, or strongly dependent on other pairwise contributions. When the
smoothness parameter is fixed, relatively small values of \(m\) may be
adequate. When smoothness is estimated jointly with range and variance, larger
NN neighborhoods can be beneficial because they include pairwise information
over a richer range of spatial lags \citep{NNP}.

The stochastic methods proposed below keep the deterministic NN candidate set
as the starting point, but evaluate only a random subset of its pairwise
contributions. Thus, \(m\) determines the richness of the candidate NN graph,
whereas the thinning mechanism determines how many of those candidate pairs are
used in the composite likelihood.

\section{Two stochastic thinning strategies}
\label{sec:two_designs}

Let
\[
\mathcal{E}_n(m)=\{(i,j): w^{NN}_{ij}(m)=1,\; j\neq i\}
\]
denote the set of candidate directed NN pairs, with size
\[
d_n:=|\mathcal{E}_n(m)|=\mathcal{O}(nm).
\]
In the directed convention used here, \(j\) is the target location and \(i\) is
one of its nearest neighbors.

We define randomized NN weights by
\begin{equation}
\label{eq:thinweights_general}
\widetilde w_{ij}=w^{NN}_{ij}(m)X_{ij},
\qquad i,j\in\{1,\ldots,n\},\quad i\neq j,
\end{equation}
where \(X_{ij}\in\{0,1\}\) are random inclusion indicators generated
independently of the observed field values. Since \(w^{NN}_{ij}(m)=0\) outside
the NN candidate set, \(\widetilde w_{ij}=0\) for all non-candidate pairs.
The thinned weighted pairwise composite log-likelihood is
\begin{equation}
\label{eq:thin_wcl}
\widetilde{wpl}(\bm\theta)=
\sum_{i=1}^n\sum_{j\neq i}
\widetilde w_{ij}\ell(\mathbf{Z}_{ij};\bm\theta)
=
\sum_{(i,j)\in\mathcal{E}_n(m)}
X_{ij}\ell(\mathbf{Z}_{ij};\bm\theta).
\end{equation}
Multiplying \eqref{eq:thin_wcl} by a deterministic positive constant, such as
the reciprocal of the retained size or expected retained size, does not change
the maximizer. Normalization is nevertheless useful in the asymptotic
arguments of Appendix~\ref{app:asymptotics}.

The parameter \(p\in(0,1]\) controls the computational budget in both designs.
In Method I, \(p\) is the marginal inclusion probability in the
constant-probability Bernoulli case. In Method II, \(p\) is the nominal retained
fraction and defines the fixed retained-pair budget
\(K_{\mathrm{tar}}=\operatorname{round}(p d_n)\). Thus, comparisons between the
two methods should be made at comparable retained-pair budgets.

Figure~\ref{fig:toy_thinning} illustrates the two stochastic thinning
mechanisms using two reproducible toy examples. In each example, \(n\)
locations are generated independently and uniformly on the unit square
\([0,1]^2\). The full pairwise Euclidean distance matrix is then computed, and,
for each target location \(j\), the \(m\) nearest neighbors \(i\) are retained
to form the directed \(m\)-NN candidate graph. Hence, in these regular toy
examples, the candidate graph contains \(d_n=nm\) directed edges. Starting from
this common candidate graph, Method I retains each candidate edge independently
with probability \(p\), whereas Method II sets
\(K_{\mathrm{tar}}=\operatorname{round}(p d_n)\) and retains exactly
\(K_{\mathrm{tar}}\) edges by allocating the budget across target-specific
nearest-neighbor lists and sampling without replacement within each list.

\begin{figure}[!htbp]
\centering

\includegraphics[width=\textwidth]{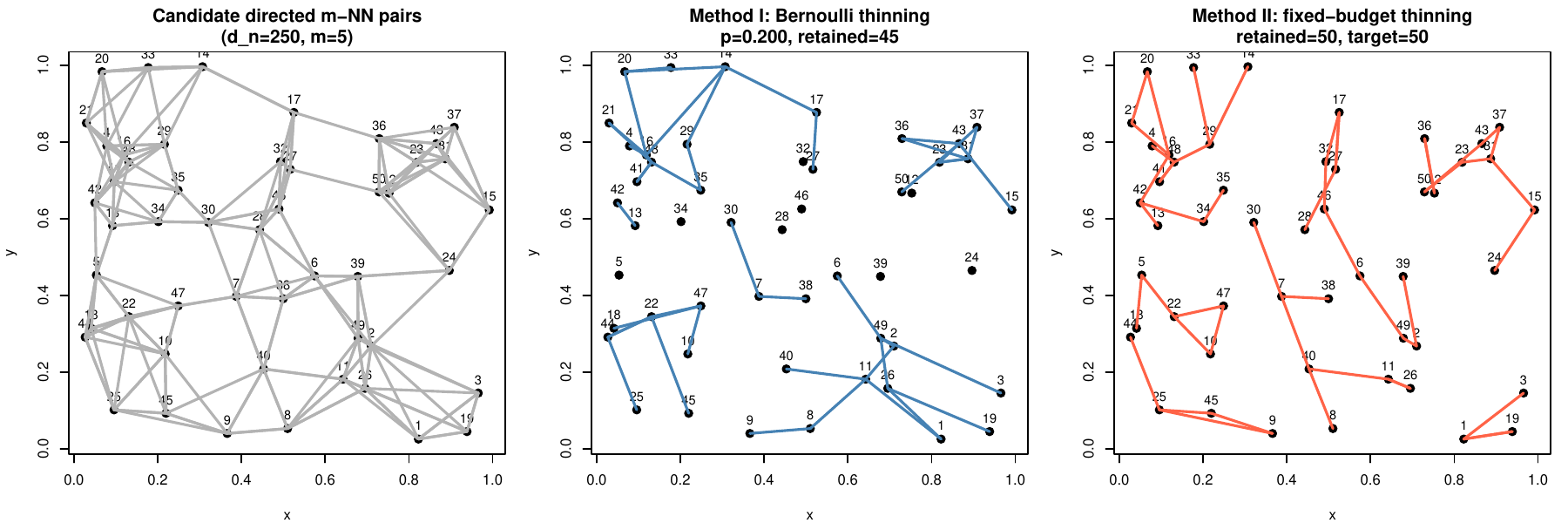}

\vspace{0.5em}

\includegraphics[width=\textwidth]{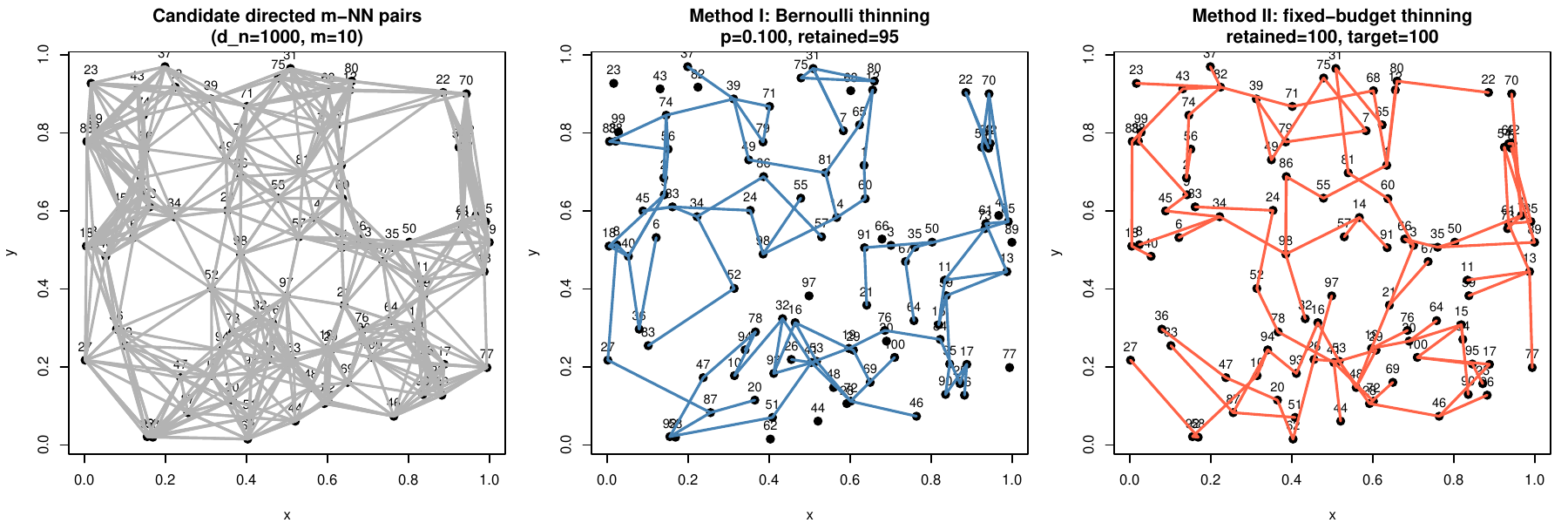}

\caption{
Toy illustrations of the two stochastic thinning designs on directed
nearest-neighbor candidate graphs. In each row, the left panel shows the full
directed \(m\)-NN candidate graph, the middle panel shows Method I based on
independent Bernoulli thinning, and the right panel shows Method II based on
fixed-budget target-wise sampling without replacement. The point locations are
generated independently and uniformly on \([0,1]^2\). For each target location
\(j\), the candidate graph is built by retaining its \(m\) nearest neighbors
\(i\), yielding \(d_n=nm\) directed candidate pairs. Method I retains each
candidate pair independently with probability \(p\), whereas Method II uses the
same nominal fraction \(p\) to define the exact budget
\(K_{\mathrm{tar}}=\operatorname{round}(p d_n)\), which is allocated across
target-specific nearest-neighbor lists. The top row uses \(n=50\), \(m=5\), and
\(p=0.2\), giving \(d_n=250\), 45 Bernoulli-retained pairs, and 50
fixed-budget pairs. The bottom row uses \(n=100\), \(m=10\), and \(p=0.1\),
giving \(d_n=1000\), 95 Bernoulli-retained pairs, and 100 fixed-budget pairs.
}
\label{fig:toy_thinning}
\end{figure}

\FloatBarrier

\subsection{Method I: calibrated independent Bernoulli thinning}

In the calibrated independent Bernoulli design, the indicators are mutually
independent conditional on the candidate set:
\[
X_{ij}\stackrel{\mathrm{ind}}{\sim}\mathrm{Bernoulli}(\pi_{ij}),
\qquad (i,j)\in\mathcal{E}_n(m).
\]
The simplest version uses constant inclusion probabilities,
\[
\pi_{ij}=p,
\qquad (i,j)\in\mathcal{E}_n(m),
\]
so that
\[
\EE_X\sum_{(i,j)\in\mathcal{E}_n(m)}X_{ij}=p d_n.
\]
Thus \(p\) directly controls the expected computational budget.

More generally, pair-specific probabilities can be used, provided they are
calibrated to satisfy
\begin{equation}
\label{eq:budget_expect}
\sum_{(i,j)\in\mathcal{E}_n(m)}\pi_{ij}=p d_n.
\end{equation}
For instance,
\begin{equation}
\label{eq:pi_weighted_general}
\pi_{ij}=\min\{1,c\omega_{ij}\},
\qquad c>0,
\end{equation}
where \(\omega_{ij}>0\) is a deterministic weight depending only on design
features of the pair, and \(c\) is chosen so that \eqref{eq:budget_expect}
holds. Without inverse-probability reweighting, pair-specific probabilities
define a weighted composite likelihood, with weights proportional to
\(\pi_{ij}\). In the simulations reported below, Method I uses the
constant-weight version \(\omega_{ij}\equiv 1\), so that \(\pi_{ij}=p\).

\subsection{Method II: fixed-budget stochastic thinning}
\label{sec:fixed_budget}

Method II is the fixed-budget design. It retains an exact number of candidate
NN pairs and allocates this budget across the target-specific NN lists. The
design depends only on the NN candidate set; it does not depend on the observed
field values or on unknown covariance parameters.

For each target location \(\mathbf s_j\), define the target-specific candidate
list
\[
\mathcal E_{n,j}(m)
=
\{(i,j): \mathbf s_i\in N_m(\mathbf s_j)\},
\qquad
d_{n,j}=|\mathcal E_{n,j}(m)|.
\]
Then
\[
\mathcal E_n(m)=\bigcup_{j=1}^n \mathcal E_{n,j}(m),
\qquad
d_n=\sum_{j=1}^n d_{n,j}.
\]
Given \(p\in(0,1]\), define the global retained-pair budget
\begin{equation}
\label{eq:fixed_budget}
K_{\mathrm{tar}}=\operatorname{round}(p d_n),
\qquad 0\le K_{\mathrm{tar}}\le d_n.
\end{equation}

The fixed-budget design assigns a local budget \(k_j\) to each target-specific
list. The local budgets are proportional to \(p d_{n,j}\), up to randomized
rounding, and satisfy the exact global constraint
\[
\sum_{j=1}^n k_j = K_{\mathrm{tar}}.
\]
In the implementation, this is obtained by taking deterministic floors
\(\lfloor p d_{n,j}\rfloor\) and allocating the remaining budget according to
the fractional parts. Thus, no tuning parameter beyond \(p\), or equivalently
\(K_{\mathrm{tar}}\), is introduced.

Conditional on the local budgets \(k_1,\ldots,k_n\), the retained set
\(\mathcal F_n(m)\) is obtained by sampling exactly \(k_j\) edges uniformly
without replacement from each target-specific list \(\mathcal E_{n,j}(m)\):
\[
\mathcal F_{n,j}(m)\subseteq \mathcal E_{n,j}(m),
\qquad
|\mathcal F_{n,j}(m)|=k_j,
\]
and
\[
\mathcal F_n(m)=\bigcup_{j=1}^n \mathcal F_{n,j}(m),
\qquad
|\mathcal F_n(m)|=K_{\mathrm{tar}}.
\]
The corresponding indicators are
\[
X_{ij}=\mathbf 1\{(i,j)\in\mathcal F_{n,j}(m)\},
\qquad
\widetilde w_{ij}=w^{NN}_{ij}(m)X_{ij}.
\]

Thus, Method II can be interpreted as proportional stratified sampling on the
directed NN graph, where the strata are the target-specific NN lists
\(\mathcal E_{n,j}(m)\). The total sample size is fixed at
\(K_{\mathrm{tar}}\), and sampling is performed without replacement within each
stratum. This connects the fixed-budget design to classical finite-population
sampling ideas, in particular stratified sampling and fixed-size sampling
without replacement \citep{Cochran:1977,Tille:2006}.

Conditional on the local budget \(k_j\), any edge \(e=(i,j)\in\mathcal
E_{n,j}(m)\) has inclusion probability
\[
\PP_X(X_e=1\mid k_j)=\frac{k_j}{d_{n,j}}.
\]
If the directed NN graph is regular, so that \(d_{n,j}=m\) for all \(j\), and
if \(pm\) is an integer, then \(k_j=pm\) for every target and the marginal
inclusion probability is exactly \(p\). For two distinct candidate edges
\(e\ne f\) in the same target-specific list \(\mathcal E_{n,j}(m)\),
\[
\PP_X(X_e=1,X_f=1\mid k_j)
=
\frac{k_j(k_j-1)}{d_{n,j}(d_{n,j}-1)},
\]
and therefore
\begin{equation}
\label{eq:fixed_budget_cov_target}
\Cov_X(X_e,X_f\mid k_j)
=
-\frac{k_j(d_{n,j}-k_j)}
{d_{n,j}^2(d_{n,j}-1)}
\le 0.
\end{equation}
Thus, the fixed-budget design induces negative dependence locally among
pairwise terms sharing the same target.

The difference between the two designs is therefore not only the randomness of
the retained size. Bernoulli thinning has independent indicators and a random
retained size with mean \(p d_n\). Fixed-budget thinning has an exact retained
size \(K_{\mathrm{tar}}\) and imposes a target-wise without-replacement
constraint. This provides exact computational control, avoids over- or
under-sampling individual target lists, and reduces redundant sampling within
each local neighborhood. This distinction is also visible in
Figure~\ref{fig:toy_thinning}: the Bernoulli design may retain a number of
edges different from its expectation, whereas the fixed-budget design matches
the target retained size exactly.

\subsection{Algorithmic summary and computational cost}
\label{sec:algorithm_cost}

The proposed estimator can be viewed as a two-stage procedure: first construct
a deterministic directed NN candidate graph, and then evaluate only a
randomized subset of its pairwise likelihood contributions.

\begin{enumerate}
\item Construct the directed \(m\)-NN candidate graph
\[
\mathcal{E}_n(m)=\{(i,j):\mathbf{s}_i\in N_m(\mathbf{s}_j),\; i\neq j\},
\qquad d_n=|\mathcal{E}_n(m)|.
\]
This step depends only on the sampling locations. Using kd-tree type nearest
neighbor search algorithms, the tree construction typically requires
\(\mathcal{O}(n\log n)\) operations and \(\mathcal{O}(n)\) storage
\citep{Arya_etal1998,NNP}. In the directed \(m\)-NN case,
\(d_n\simeq nm\).

\item Specify the nominal retained fraction \(p\in(0,1]\). For a given
candidate graph size \(m\), \(p\) controls the thinning intensity and determines
the target retained-pair budget
\[
K_{\mathrm{tar}}=\operatorname{round}(p d_n).
\]
For Bernoulli thinning, \(p\) is the marginal inclusion probability and the
retained size has expectation \(p d_n\). For fixed-budget thinning,
\(K_{\mathrm{tar}}\) is the exact number of retained pairs; allocating this
budget across the \(n\) target-specific NN lists requires one pass over the
lists and has cost \(\mathcal{O}(n)\).

In the empirical sections we often report the equivalent budget per
observation, \(K_{\mathrm{tar}}/n\), because it is directly interpretable as
the number of retained pairwise contributions per observation. For a regular
directed \(m\)-NN graph,
\[
p \simeq \frac{K_{\mathrm{tar}}/n}{m}.
\]

\item Generate the retained edge set. Under Bernoulli thinning, each candidate
edge is retained independently with probability \(p\). Under fixed-budget
thinning, sampling is performed without replacement within the target-specific
NN lists, using the allocated local budgets. Conditional on the NN graph,
generating the retained edge set requires scanning the candidate lists and has
cost \(\mathcal{O}(d_n)\).

\item Maximize the thinned pairwise composite log-likelihood
\[
\widetilde{wpl}(\bm\theta)
=
\sum_{(i,j)\in\mathcal{E}_n(m)}
X_{ij}\ell(\mathbf{Z}_{ij};\bm\theta),
\]
to obtain
\[
\widehat{\bm\theta}
=
\argmax_{\bm\theta\in\Theta}
\widetilde{wpl}(\bm\theta).
\]
Each objective evaluation involves only the retained pairs and therefore has
cost \(\mathcal{O}(K_n)\), where \(K_n\) denotes the realized retained size.

\item If standard errors are required, estimate the Godambe covariance matrix
using the parametric score bootstrap described in Section~\ref{sec:inference},
reusing the same NN candidate graph and applying the same thinning design and
retained-pair budget to each simulated dataset.
\end{enumerate}

Thus, for fixed \(m\), the deterministic NN pairwise likelihood evaluates
approximately \(d_n\simeq nm\) bivariate likelihood terms per objective
evaluation. The stochastic versions reduce this number to
\[
K_n \approx p d_n \simeq pnm
\]
under Bernoulli thinning, in expectation, and to
\[
K_n=K_{\mathrm{tar}}=\operatorname{round}(p d_n)
\]
under fixed-budget thinning. If \(K_n=\mathcal{O}(n)\), each stochastic
objective evaluation is linear in \(n\), conditional on the one-time
construction of the NN candidate graph. The parameter \(m\) controls the
richness of the candidate graph, whereas \(p\), for a given \(m\), controls the
effective number of pairwise likelihood terms actually evaluated.

\section{Inference and standard errors}
\label{sec:inference}

Composite-likelihood estimators are commonly accompanied by sandwich, or
Godambe, standard errors. For a bivariate marginal pairwise contribution
\(\ell(\mathbf Z_{ij};\bm\theta)\), define the score contribution and the
negative Hessian as
\[
\bm{s}_{ij}(\bm\theta)
=
\nabla_{\bm\theta}\ell(\mathbf Z_{ij};\bm\theta),
\qquad
\bm{h}_{ij}(\bm\theta)
=
-\nabla^2_{\bm\theta}\ell(\mathbf Z_{ij};\bm\theta).
\]
For a realized thinned edge set
\[
\mathcal S_n
=
\{(i,j)\in\mathcal E_n(m):\widetilde w_{ij}=1\},
\qquad
K_n=|\mathcal S_n|,
\]
the corresponding composite score and sensitivity matrix are
\[
\widetilde{\bm U}_{n}(\bm\theta)
=
\sum_{(i,j)\in\mathcal S_n}
\bm{s}_{ij}(\bm\theta),
\]
and
\[
\widetilde{\bm H}_{n}(\bm\theta)
=
\sum_{(i,j)\in\mathcal S_n}
\bm{h}_{ij}(\bm\theta).
\]

Under the increasing-domain weak-dependence conditions commonly used for local
weighted pairwise composite likelihoods, such estimators are consistent and
asymptotically normal, with asymptotic covariance given by the inverse Godambe
information; see, for example, \citet{Bevilacqua:Gaetan:2015}. In the present
setting, once the randomized thinning weights are realized, the objective is
again a local weighted pairwise composite likelihood, but based on the selected
edge set \(\mathcal S_n\). Appendix~\ref{app:asymptotics} states the
corresponding asymptotic properties under Bernoulli and fixed-budget thinning.

We use an unconditional-design interpretation of the stochastic estimator, in
which randomness comes from both the spatial field and the thinning design. Let
\[
\bm H
=
\lim_{n\to\infty}
K_n^{-1}
\EE\{\widetilde{\bm H}_{n}(\bm\theta_0)\},
\qquad
\bm J
=
\lim_{n\to\infty}
K_n^{-1}
\Var\{\widetilde{\bm U}_{n}(\bm\theta_0)\},
\]
where the expectation and variance are taken over both sources of randomness.
Then
\[
K_n^{1/2}(\widehat{\bm\theta}-\bm\theta_0)
\Rightarrow
N\{0,\bm H^{-1}\bm J\bm H^{-1}\}.
\]
Thus, up to the \(K_n^{-1}\) scaling already incorporated in the definitions of
\(\bm H\) and \(\bm J\), the asymptotic covariance has the usual sandwich form
\[
\bm G^{-1}
=
\bm H^{-1}\bm J\bm H^{-1}.
\]

Direct estimation of \(\bm J\) is difficult in spatial applications because
the retained pairwise score contributions remain spatially dependent. The
fixed-budget design controls the number of evaluated pairs exactly and induces
negative dependence within target-specific NN lists, but it does not remove
spatial dependence among nearby retained score contributions. For this reason,
we estimate the Godambe variance using a parametric score bootstrap.

Let \(\widehat{\bm\theta}\) be the fitted parameter vector and let
\(\widehat{\bm H}\) be the sensitivity matrix evaluated at
\(\widehat{\bm\theta}\). For \(b=1,\ldots,B\), a bootstrap dataset
\(Z^{*(b)}\) is simulated from the fitted model
\(P_{\widehat{\bm\theta}}\) at the same observation locations. The
deterministic NN candidate graph is reused, because it depends only on the
locations. The same thinning design and retained-pair budget as in the original
fit are then applied to each bootstrap dataset.

For each bootstrap dataset, the composite score is evaluated at the original
estimate \(\widehat{\bm\theta}\), without refitting the model:
\[
\widetilde{\bm U}^{*(b)}_{n}(\widehat{\bm\theta})
=
\sum_{(i,j)\in\mathcal E_n(m)}
\widetilde w^{*(b)}_{ij}
\bm{s}_{ij}
(\widehat{\bm\theta};Z^{*(b)}).
\]
The variability matrix is estimated by the empirical covariance of the
bootstrap scores,
\[
\widehat{\bm J}
=
\operatorname{Var}_{b}
\left\{
\widetilde{\bm U}^{*(b)}_{n}(\widehat{\bm\theta})
\right\}.
\]
Here \(\widehat{\bm H}\) and \(\widehat{\bm J}\) denote unnormalized sensitivity
and score-variability matrices, based on sums over the retained pairs. With
normalized versions, the corresponding \(K_n^{-1}\) factor must be included
explicitly. The estimated sandwich covariance matrix is
\[
\widehat{\operatorname{Var}}
(\widehat{\bm\theta})
=
\widehat{\bm H}^{-1}
\widehat{\bm J}
\widehat{\bm H}^{-1},
\]
and standard errors are computed as
\[
\widehat{\operatorname{se}}(\widehat\theta_j)
=
\left[
\widehat{\operatorname{Var}}
(\widehat{\bm\theta})_{jj}
\right]^{1/2}.
\]

This score-bootstrap procedure differs from a classical parametric bootstrap
based on refitting the model to every simulated dataset. It avoids repeated
optimizations and directly estimates the variability matrix of the composite
score, making simulation-based Godambe estimation feasible for massive spatial
datasets.

The randomized thinning is regenerated independently for each bootstrap
dataset. The resulting standard errors therefore include both the variability
of the random field and the additional Monte Carlo variability introduced by
thinning. A conditional-design version is also possible: in that case the
realized thinned edge set from the original fit is kept fixed across bootstrap
datasets, and the estimated variance is conditional on the selected edge set.
The unconditional version is used here because it matches the interpretation of
the stochastic estimators as randomized procedures.

\section{Simulation studies}
\label{sec:simulation}

Throughout the simulation studies we consider a Gaussian random field
\[
Z(\bm s)=\mu+Y(\bm s), \qquad \bm s\in[0,1]^2,
\]
where \(Y\) is a zero-mean stationary isotropic Gaussian random field with
Mat\'ern covariance function
\[
C_{\theta}(h)
=
\sigma^2 M_{\nu}(h/\alpha)
=
\sigma^2
\frac{2^{1-\nu}}{\Gamma(\nu)}
\left(\frac{h}{\alpha}\right)^\nu
K_\nu\left(\frac{h}{\alpha}\right),
\qquad h\ge 0.
\]
Here \(\mu\) is the mean, \(\alpha\) is the scale parameter,
\(\sigma^2\) is the sill, \(\nu\) is the Mat\'ern smoothness parameter,
and \(K_\nu\) denotes the modified Bessel function of the second kind. No
nugget effect is included. The parameter vector estimated in all simulation
experiments is
\[
\theta=(\mu,\alpha,\sigma^2,\nu).
\]
In all scenarios the true mean is \(\mu=0\), the true sill is \(\sigma^2=1\),
and the scale parameter \(\alpha\) is chosen to obtain the specified practical
range.

The simulation section has three complementary goals. Study~1 is an internal
calibration study within the NN pairwise likelihood family; it provides
numerical evidence on how much the deterministic NN pairwise objective can be
thinned while preserving most of the statistical efficiency of the FullNN
estimator. Study~2 uses the budget suggested by this calibration study to
compare the resulting stochastic NN pairwise estimators with a Vecchia-type
Gaussian likelihood approximation. Study~3 then isolates the additional Monte
Carlo variability introduced by the random thinning step by repeatedly thinning
the same simulated datasets. Thus, Study~1 calibrates the retained-pair budget,
Study~2 compares the calibrated estimators with an external likelihood-based
benchmark, and Study~3 quantifies the internal variability of the proposed
stochastic designs.

All computations in the simulation studies were performed on an Apple M3 Pro
machine with 18 GB of unified memory. Reported computing times should therefore
be interpreted as implementation- and hardware-dependent elapsed times.

\subsection{Simulation study 1: retained-pair budget calibration}
\label{sec:sim_budget}

The first simulation study investigates how aggressively the deterministic NN
pairwise likelihood can be thinned while preserving most of the statistical
efficiency of the FullNN estimator. The Vecchia approximation is not included
in this calibration study; it is used later as an external benchmark in
Section~\ref{sec:sim_vecchia}.

We simulated Gaussian random fields from the Mat\'ern model described above.
For each sample size, the sampling window was kept fixed, but the practical
range was reduced as \(n\) increased:
\[
(n,r_{\mathrm{prac}})\in
\{(100000,0.15),(250000,0.10),(500000,0.05)\}.
\]
Thus, although the simulations use a common spatial window, the effective range
of dependence decreases with \(n\). The resulting sequence of designs is
compatible with the weak-dependence regime underlying increasing-domain
asymptotics, because spatial dependence becomes increasingly local relative to
the observation window.

Two smoothness values were considered, \(\nu=0.5\) and \(\nu=1.5\). Since
smoothness estimation benefits from a richer set of local distances
\citep{NNP}, the NN candidate size was chosen as \(m=10\) for \(\nu=0.5\) and
\(m=50\) for \(\nu=1.5\). Table~\ref{tab:sim1_design} summarizes the six
scenarios.

For each scenario, we generated 250 independent datasets at a fixed set of
uniformly distributed sampling locations. The deterministic NN pairwise
estimator using all \(nm\) candidate pairs is denoted by FullNN. The stochastic
estimators are the Bernoulli and fixed-budget designs of
Section~\ref{sec:two_designs}. For a given candidate graph size \(m\), the
thinning fraction \(p\) is the tuning parameter of the proposed stochastic
methods. For reporting and calibration, we express the resulting budget through
\[
K_{\mathrm{tar}}/n \in \{0.10,0.25,0.50,1,2,4\},
\]
which corresponds approximately to
\[
p \simeq \frac{K_{\mathrm{tar}}/n}{m}
\]
in a regular directed \(m\)-NN graph. Thus, \(K_{\mathrm{tar}}/n=2\) corresponds
to \(p=0.20\) when \(m=10\), and to \(p=0.04\) when \(m=50\). All methods were
fitted by maximizing the pairwise marginal composite likelihood with the same
starting values and parameter bounds. For each estimator we recorded bias,
standard deviation, RMSE, computing time, speedup relative to FullNN, and
relative RMSE with respect to FullNN.

\begin{table}[t!]
\centering
\caption{Simulation Study~1 design. The scale parameter \(\alpha\) is chosen so
that the Mat\'ern practical range is \(r_{\mathrm{prac}}\). The FullNN column
reports the number of directed NN candidate pairs, equal to \(nm\).}
\label{tab:sim1_design}
\begin{tabular}{rrrrrr}
\toprule
\(n\) & \(r_{\mathrm{prac}}\) & \(\nu\) & \(\alpha\) & \(m\) & FullNN pairs \\
\midrule
100000 & 0.15 & 0.5 & 0.050071 & 10 & 1,000,000 \\
100000 & 0.15 & 1.5 & 0.031620 & 50 & 5,000,000 \\
250000 & 0.10 & 0.5 & 0.033381 & 10 & 2,500,000 \\
250000 & 0.10 & 1.5 & 0.021080 & 50 & 12,500,000 \\
500000 & 0.05 & 0.5 & 0.016690 & 10 & 5,000,000 \\
500000 & 0.05 & 1.5 & 0.010540 & 50 & 25,000,000 \\
\bottomrule
\end{tabular}
\end{table}

Table~\ref{tab:sim1_range_allparams} summarizes the effect of the retained-pair
budget on all model parameters. For each value of \(K_{\mathrm{tar}}/n\) and
for each stochastic design, the table reports the range, over the six
simulation scenarios, of the relative RMSEs with respect to FullNN. Values
close to one indicate that the stochastic estimator retains nearly the same
statistical efficiency as the deterministic NN pairwise estimator.

The results show that the mean, scale, and sill parameters are only mildly
affected by stochastic thinning, especially for moderate or large budgets. The
smoothness parameter is the most sensitive parameter. Very aggressive budgets,
such as \(K_{\mathrm{tar}}/n=0.10\) and \(0.25\), produce large losses for
smoothness estimation. This is particularly visible for Bernoulli thinning at
the smallest budget, where the retained size is random and some fits become
unstable. The degradation becomes moderate around \(K_{\mathrm{tar}}/n=1\),
while \(K_{\mathrm{tar}}/n=2\) gives a stable compromise across parameters and
scenarios. The more conservative budget \(K_{\mathrm{tar}}/n=4\) yields results
close to FullNN for all parameters, at the cost of smaller speedups.

\begin{table}[t!]
\centering
\caption{Simulation Study~1. Range, over the six simulation scenarios, of the
relative RMSEs with respect to the deterministic FullNN estimator. Values close
to one indicate nearly the same RMSE as FullNN. The table is indexed by the
retained-pair budget \(K_{\mathrm{tar}}/n\); for a given candidate graph size
\(m\), the corresponding tuning parameter \(p\) depends on \(m\).}
\label{tab:sim1_range_allparams}
\small
\begin{tabular}{rlcccc}
\toprule
\(K_{\mathrm{tar}}/n\) & Method &
\(r_\mu\) & \(r_\alpha\) & \(r_{\sigma^2}\) & \(r_\nu\) \\
\midrule
0.10 & Bernoulli    & 0.99--1.01 & 1.26--1.86 & 0.99--1.03 & 3.37--78.07 \\
0.10 & Fixed Budget & 0.99--1.01 & 1.28--1.79 & 1.00--1.02 & 3.15--4.96 \\
0.25 & Bernoulli    & 1.00--1.00 & 1.08--1.37 & 0.99--1.01 & 2.24--3.18 \\
0.25 & Fixed Budget & 1.00--1.01 & 1.08--1.38 & 0.99--1.02 & 2.30--3.12 \\
0.50 & Bernoulli    & 1.00--1.00 & 1.10--1.23 & 1.00--1.01 & 1.72--2.30 \\
0.50 & Fixed Budget & 1.00--1.00 & 1.03--1.22 & 1.00--1.00 & 1.62--2.15 \\
1.00 & Bernoulli    & 1.00--1.00 & 1.03--1.15 & 1.00--1.01 & 1.45--1.77 \\
1.00 & Fixed Budget & 1.00--1.00 & 1.02--1.13 & 1.00--1.00 & 1.41--1.72 \\
2.00 & Bernoulli    & 1.00--1.00 & 1.00--1.05 & 0.99--1.00 & 1.17--1.38 \\
2.00 & Fixed Budget & 1.00--1.00 & 1.01--1.04 & 1.00--1.00 & 1.14--1.37 \\
4.00 & Bernoulli    & 1.00--1.00 & 1.01--1.03 & 1.00--1.00 & 1.09--1.18 \\
4.00 & Fixed Budget & 1.00--1.00 & 0.99--1.02 & 1.00--1.00 & 1.05--1.19 \\
\bottomrule
\end{tabular}
\end{table}

Table~\ref{tab:sim1_k2_allparams} gives detailed results for
\(K_{\mathrm{tar}}/n=2\). The stochastic estimators evaluate only \(2n\)
bivariate likelihood terms, independently of \(m\). For the rough case
\(\nu=0.5\), this corresponds to 20\% of the \(m=10\) candidate graph. For the
smoother case \(\nu=1.5\), it corresponds to only 4\% of the \(m=50\)
candidate graph. The relative RMSEs for the mean, scale, and sill remain close
to one, whereas the largest loss of efficiency is concentrated on the
smoothness parameter.

\begin{table}[t!]
\centering
\caption{Simulation Study~1 results at \(K_{\mathrm{tar}}/n=2\). The column
\(p\) reports the tuning parameter corresponding to this budget for the given
candidate graph size \(m\). Relative RMSEs are computed with respect to FullNN
in the same scenario.}
\label{tab:sim1_k2_allparams}
\small
\begin{tabular}{rrrlrrrrrr}
\toprule
\(n\) & \(\nu\) & \(m\) & Method & \(p\) & Speedup &
\(r_\mu\) & \(r_\alpha\) & \(r_{\sigma^2}\) & \(r_\nu\) \\
\midrule
100000 & 0.5 & 10 & Bernoulli    & 0.20 &  4.64 & 1.00 & 1.01 & 1.00 & 1.28 \\
100000 & 0.5 & 10 & Fixed Budget & 0.20 &  4.74 & 1.00 & 1.02 & 1.00 & 1.37 \\
\addlinespace
100000 & 1.5 & 50 & Bernoulli    & 0.04 & 26.71 & 1.00 & 1.05 & 1.00 & 1.29 \\
100000 & 1.5 & 50 & Fixed Budget & 0.04 & 26.72 & 1.00 & 1.01 & 1.00 & 1.14 \\
\addlinespace
250000 & 0.5 & 10 & Bernoulli    & 0.20 &  5.11 & 1.00 & 1.00 & 1.00 & 1.38 \\
250000 & 0.5 & 10 & Fixed Budget & 0.20 &  5.34 & 1.00 & 1.01 & 1.00 & 1.32 \\
\addlinespace
250000 & 1.5 & 50 & Bernoulli    & 0.04 & 27.81 & 1.00 & 1.01 & 1.00 & 1.17 \\
250000 & 1.5 & 50 & Fixed Budget & 0.04 & 25.82 & 1.00 & 1.04 & 1.00 & 1.27 \\
\addlinespace
500000 & 0.5 & 10 & Bernoulli    & 0.20 &  5.39 & 1.00 & 1.03 & 0.99 & 1.36 \\
500000 & 0.5 & 10 & Fixed Budget & 0.20 &  4.54 & 1.00 & 1.02 & 1.00 & 1.37 \\
\addlinespace
500000 & 1.5 & 50 & Bernoulli    & 0.04 & 31.55 & 1.00 & 1.04 & 1.00 & 1.18 \\
500000 & 1.5 & 50 & Fixed Budget & 0.04 & 22.57 & 1.00 & 1.02 & 1.00 & 1.18 \\
\bottomrule
\end{tabular}
\end{table}

The comparison between Bernoulli and fixed-budget thinning shows that the two
designs have similar overall statistical behavior at the same retained-pair
budget. Bernoulli thinning is simpler but controls the retained size only in
expectation, whereas fixed-budget thinning enforces the exact retained-pair
budget and distributes retained pairs more regularly across target-specific NN
lists. This additional control does not imply uniform RMSE dominance. For the
mean, scale, and sill parameters, the two stochastic estimators are typically
very close; for smoothness, fixed-budget thinning improves performance in some
scenarios but not uniformly. Thus, the main role of fixed-budget thinning is
exact computational control and a more regular use of the NN candidate graph,
rather than guaranteed efficiency improvement.

Overall, Study~1 provides numerical support for using
\(K_{\mathrm{tar}}/n=2\) as a practical budget for the Mat\'ern simulation
settings considered here when smoothness is estimated. This value should be
interpreted as an empirical calibration for the present class of designs, rather
than as a universal recommendation.

For the proposed stochastic methods, the corresponding values of \(p\) are
reported in Table~\ref{tab:sim1_k2_allparams}. This budget gives large
computational savings, especially when \(m=50\). At this budget, the mean,
scale, and sill parameters remain very close to the deterministic FullNN
benchmark, while the loss of efficiency is concentrated mainly on the
smoothness parameter. The budget \(K_{\mathrm{tar}}/n=4\) is a more
conservative option when smoothness estimation is the primary objective.
Budgets below \(K_{\mathrm{tar}}/n=0.5\) are useful as stress tests, but are too
aggressive for reliable smoothness estimation in the present experiments.

\subsection{Simulation study 2: comparison with Vecchia approximation}
\label{sec:sim_vecchia}

The second simulation study compares the proposed stochastic NN pairwise
estimators with a Vecchia-type Gaussian likelihood approximation. We use the
same six scenarios as in Study~1. Based on the budget-calibration results, the
stochastic NN pairwise methods are run at the retained-pair budget
\[
K_{\mathrm{tar}}/n = 2.
\]
For a given candidate graph size \(m\), this budget corresponds to a specific
value of the thinning parameter \(p\). The candidate NN size is again \(m=10\)
for \(\nu=0.5\) and \(m=50\) for \(\nu=1.5\), giving \(p=0.20\) and
\(p=0.04\), respectively. In the fixed-budget design this corresponds, in the
regular directed NN graph, to two retained pairs per target location.

The stochastic NN pairwise estimators were fitted with \texttt{GeoFit} from
\texttt{GeoModels}, which directly maximizes the selected weighted pairwise
composite likelihood. The Vecchia benchmark was fitted with 30 conditioning
neighbors using \texttt{fit\_model} from the \texttt{GpGp} package for
\textsf{R} \citep{GpGp}. The two implementations differ in several respects:
\texttt{fit\_model} uses a grouped Vecchia implementation, profiles the linear
mean parameters during covariance-parameter optimization, and applies Fisher
scoring, whereas \texttt{GeoFit} directly optimizes the selected pairwise
composite likelihood without profiling the mean parameters. The reported
timings should therefore be interpreted as practical implementation-level
timings for the specific workflows used here, rather than as an optimizer- or
profiling-matched comparison of the underlying estimating criteria.

It is also important to note that both computational workflows could be further
accelerated by using more specialized implementations. For example, the
evaluation of large collections of local likelihood or covariance contributions
is naturally amenable to parallelization, and GPU-based strategies can provide
substantial speedups for large-scale spatial likelihood computations
\citep{James2024}. Therefore, the timings reported below should be
viewed as representative of the serial or standard implementations used in this
study, rather than as hardware-optimized lower bounds for either approach.

For each parameter, we report RMSE ratios with respect to Vecchia,
\[
r_j(\mathrm{method})
=
\frac{\mathrm{RMSE}_j(\mathrm{method})}
     {\mathrm{RMSE}_j(\mathrm{Vecchia})},
\qquad
j\in\{\mu,\alpha,\sigma^2,\nu\}.
\]
Values larger than one indicate larger RMSE than Vecchia. We also report the
computing-time speedup with respect to Vecchia,
\[
S_{\mathrm{Vec}}(\mathrm{method})
=
\frac{T_{\mathrm{Vecchia}}}{T_{\mathrm{method}}},
\]
which should be interpreted as an implementation-level speedup for the specific
workflows described above.

\begin{table}[t!]
\centering
\caption{Simulation Study~2: relative performance with respect to the Vecchia
approximation. The column \(p\) reports the thinning parameter for the
stochastic NN pairwise estimators; it is not applicable to FullNN. The columns
\(r_\mu\), \(r_\alpha\), \(r_{\sigma^2}\), and \(r_\nu\) report RMSE ratios
relative to Vecchia. Values larger than one indicate larger RMSE than Vecchia.
The column \(S_{\mathrm{Vec}}\) reports the implementation-level computing-time
speedup relative to the \texttt{GpGp} Vecchia fit. Vecchia uses 30 conditioning
neighbors.}
\label{tab:study2_vecchia_relative}
\small
\begin{tabular}{rrrlcrrrrr}
\toprule
\(n\) & \(\nu\) & \(m\) & Method & \(p\) &
\(S_{\mathrm{Vec}}\) &
\(r_\mu\) & \(r_\alpha\) & \(r_{\sigma^2}\) & \(r_\nu\) \\
\midrule
100000 & 0.5 & 10 & FullNN       & --   &  2.23 & 1.03 & 1.12 & 1.09 &  1.28 \\
100000 & 0.5 & 10 & Bernoulli    & 0.20 & 10.73 & 1.03 & 1.13 & 1.09 &  1.81 \\
100000 & 0.5 & 10 & Fixed Budget & 0.20 & 10.80 & 1.03 & 1.13 & 1.09 &  1.81 \\
\addlinespace
100000 & 1.5 & 50 & FullNN       & --   &  0.75 & 1.05 & 2.37 & 1.27 &  8.80 \\
100000 & 1.5 & 50 & Bernoulli    & 0.04 & 18.16 & 1.05 & 2.47 & 1.27 & 10.96 \\
100000 & 1.5 & 50 & Fixed Budget & 0.04 & 18.33 & 1.05 & 2.49 & 1.27 & 10.92 \\
\addlinespace
250000 & 0.5 & 10 & FullNN       & --   &  2.23 & 1.07 & 1.10 & 1.06 &  1.37 \\
250000 & 0.5 & 10 & Bernoulli    & 0.20 & 11.38 & 1.07 & 1.13 & 1.06 &  2.01 \\
250000 & 0.5 & 10 & Fixed Budget & 0.20 & 11.26 & 1.07 & 1.13 & 1.06 &  1.87 \\
\addlinespace
250000 & 1.5 & 50 & FullNN       & --   &  1.68 & 1.06 & 1.81 & 0.97 &  8.13 \\
250000 & 1.5 & 50 & Bernoulli    & 0.04 & 37.51 & 1.06 & 1.94 & 0.98 & 10.71 \\
250000 & 1.5 & 50 & Fixed Budget & 0.04 & 37.51 & 1.06 & 1.90 & 0.97 & 10.16 \\
\addlinespace
500000 & 0.5 & 10 & FullNN       & --   &  2.82 & 1.02 & 1.03 & 1.01 &  1.25 \\
500000 & 0.5 & 10 & Bernoulli    & 0.20 & 13.75 & 1.02 & 1.07 & 1.02 &  1.73 \\
500000 & 0.5 & 10 & Fixed Budget & 0.20 & 13.96 & 1.02 & 1.05 & 1.01 &  1.73 \\
\addlinespace
500000 & 1.5 & 50 & FullNN       & --   &  1.42 & 1.02 & 1.95 & 0.93 &  5.23 \\
500000 & 1.5 & 50 & Bernoulli    & 0.04 & 32.49 & 1.02 & 2.02 & 0.93 &  6.34 \\
500000 & 1.5 & 50 & Fixed Budget & 0.04 & 32.94 & 1.02 & 2.08 & 0.93 &  6.83 \\
\bottomrule
\end{tabular}
\end{table}

Table~\ref{tab:study2_vecchia_relative} shows a clear
statistical--computational trade-off. Vecchia is generally the most accurate
method for estimating the smoothness parameter, especially when \(\nu=1.5\).
For the mean and sill parameters, the differences between Vecchia and the
pairwise methods are much smaller. For the scale parameter, Vecchia is also
more accurate in the smoother scenarios, although the discrepancy is less
extreme than for the smoothness parameter.

\begin{figure}[t!]
\centering
\includegraphics[width=0.95\textwidth]{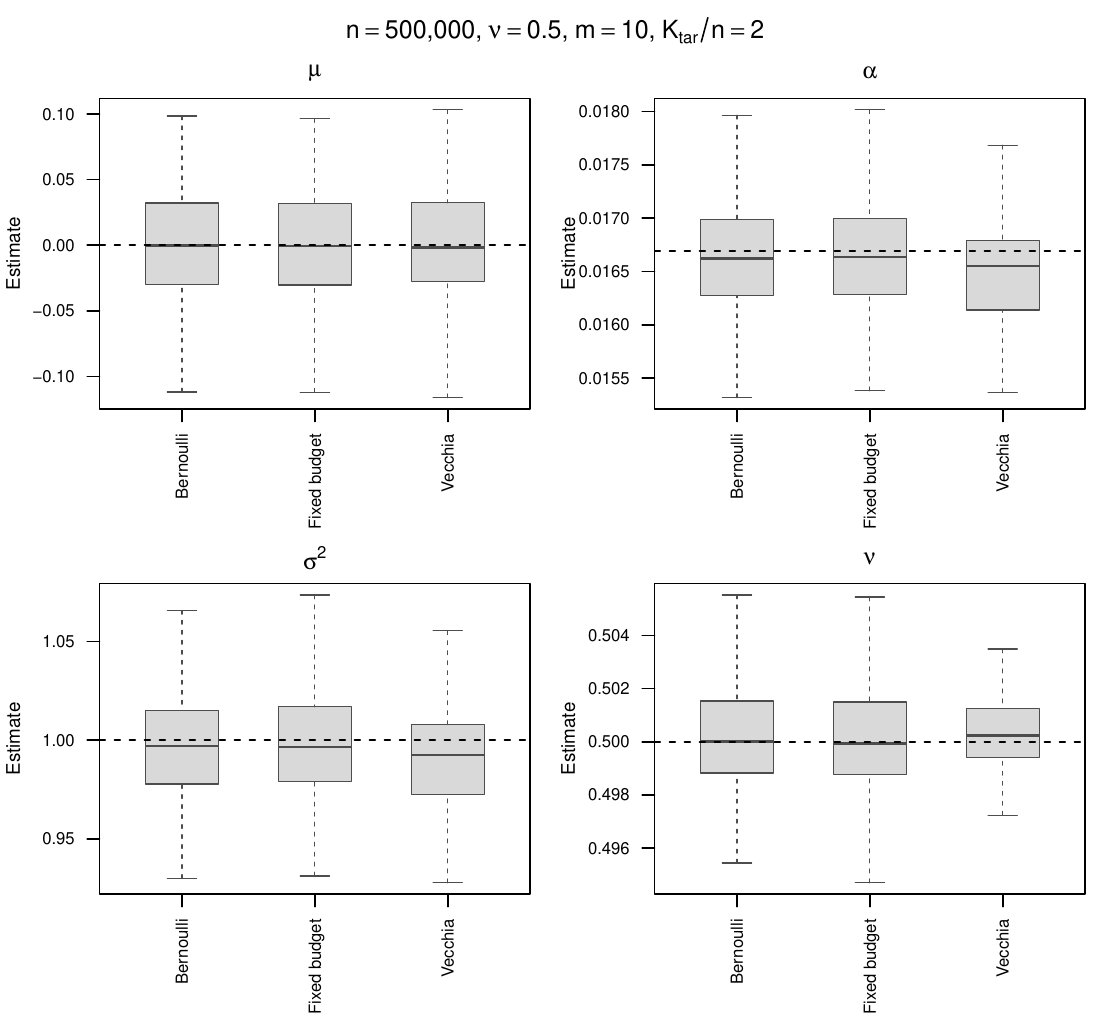}
\caption{Simulation Study~2. Boxplots of parameter estimates for the largest
sample size in the rough Mat\'ern scenario: \(n=500{,}000\), \(\nu=0.5\),
\(m=10\), and \(K_{\mathrm{tar}}/n=2\). Dashed horizontal lines indicate the
true parameter values.}
\label{fig:sim2_boxplots_nu05}
\end{figure}

\begin{figure}[t!]
\centering
\includegraphics[width=0.95\textwidth]{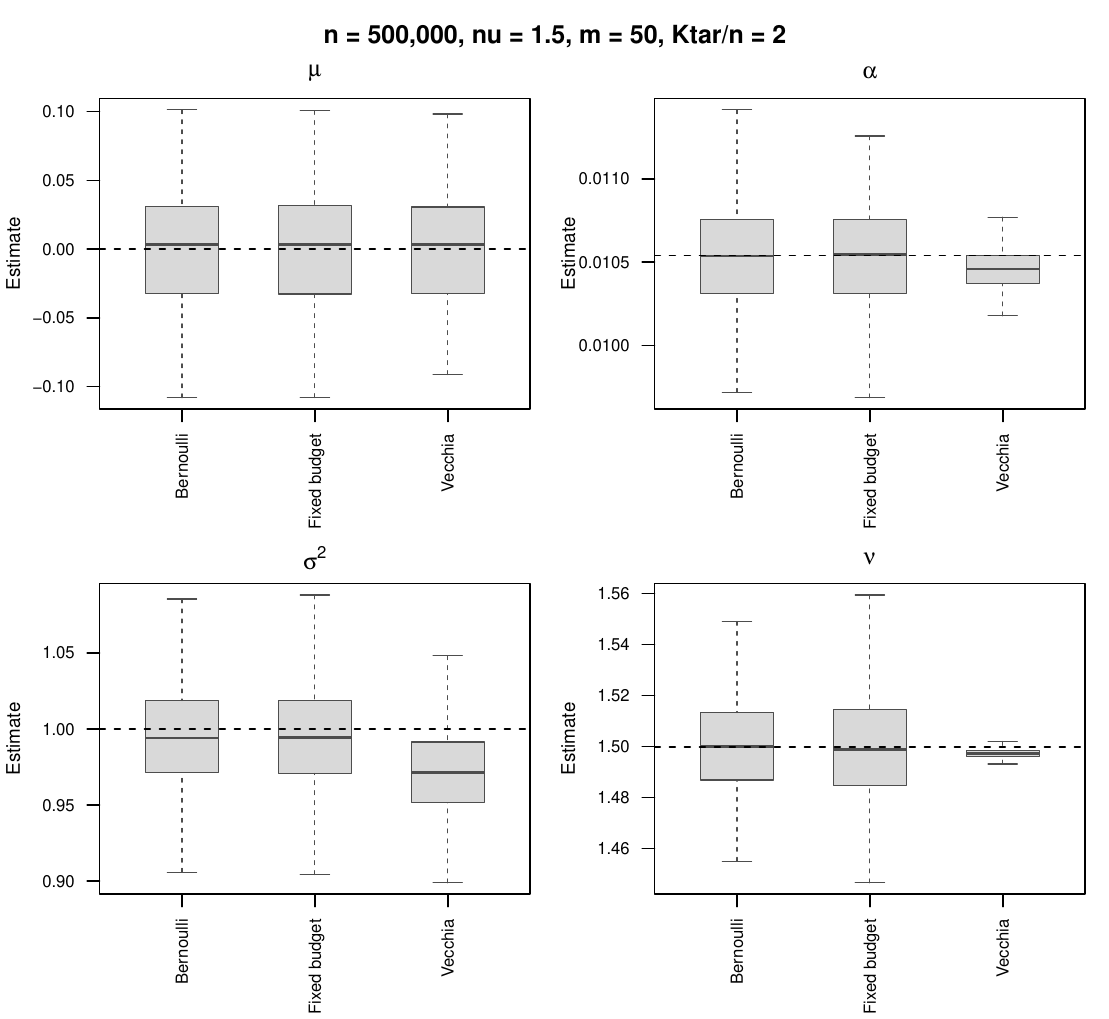}
\caption{Simulation Study~2. Boxplots of parameter estimates for the largest
sample size in the smoother Mat\'ern scenario: \(n=500{,}000\), \(\nu=1.5\),
\(m=50\), and \(K_{\mathrm{tar}}/n=2\). Dashed horizontal lines indicate the
true parameter values.}
\label{fig:sim2_boxplots_nu15}
\end{figure}

Figures~\ref{fig:sim2_boxplots_nu05} and~\ref{fig:sim2_boxplots_nu15} provide
a graphical comparison for the largest sample size. In the rough scenario,
\(\nu=0.5\), the stochastic NN pairwise estimators and Vecchia have very
similar distributions for all parameters. In the smoother scenario,
\(\nu=1.5\), the distributions remain similar for the mean and sill parameters,
whereas Vecchia is more concentrated for the scale and, especially, for the
smoothness parameter. This agrees with the RMSE ratios in
Table~\ref{tab:study2_vecchia_relative}: the main loss of efficiency of the
stochastic NN pairwise estimators relative to Vecchia is concentrated on
smoothness estimation.

In the implementation-level comparison considered here, the computational
differences are substantial. For \(\nu=0.5\), the stochastic pairwise fits
required about 11--14 times less covariance-fitting time than the Vecchia
benchmark. For \(\nu=1.5\), where the deterministic candidate graph uses
\(m=50\), the difference is even larger: about 18-fold for \(n=100000\), about
37--38-fold for \(n=250000\), and about 32--33-fold for \(n=500000\). These
results show that the stochastic NN pairwise estimators can be attractive
practical alternatives to the Vecchia benchmark in large spatial datasets when
covariance-fitting time is a primary constraint. The price paid for the large
speedups is not uniform across parameters: it is small for the mean and sill,
moderate for the scale in some smoother scenarios, and largest for the
Mat\'ern smoothness parameter.

The two proposed stochastic designs behave very similarly. Bernoulli thinning
has random retained size but is simple to implement, whereas fixed-budget
thinning enforces the retained-pair budget exactly and allocates it across the
target-specific NN lists. Across the six scenarios, neither stochastic design
uniformly dominates the other in terms of RMSE. At \(K_{\mathrm{tar}}/n=2\),
the fixed-budget design is slightly better for smoothness in some scenarios
and slightly worse or essentially identical in others. Its main practical
advantage is therefore exact budget control and a more regular allocation of
the retained pairs, rather than systematic RMSE dominance.

Overall, Study~2 shows that, in the practical implementation-level comparison
considered here, the proposed stochastic NN pairwise estimators provide
substantial reductions in covariance-fitting time relative to the Vecchia
benchmark when some loss of efficiency is acceptable. The loss is small for the
mean and sill, moderate for the scale in some smoother scenarios, and largest
for the Mat\'ern smoothness parameter.

\FloatBarrier

\subsection{Simulation study 3: internal variability of stochastic thinning}
\label{sec:sim_internal}

The third simulation study investigates the additional variability introduced by
the random thinning step itself. Unlike Studies~1 and~2, where each simulated
dataset was associated with a single thinning realization, here we condition on
the same simulated dataset and repeatedly generate independent thinnings. This
separates the variability due to the Gaussian random field from the variability
due only to the stochastic pair-selection mechanism.

We use the same six scenarios as in the previous studies and fix the retained
budget at
\[
K_{\mathrm{tar}}/n=2.
\]
As in Study~2, this budget corresponds to \(p=0.20\) when \(m=10\) and to
\(p=0.04\) when \(m=50\). For each scenario, we generate
\(N_{\mathrm{sim}}=50\) independent datasets. For each dataset, FullNN is
fitted once, while Bernoulli thinning and fixed-budget thinning are each
repeated 30 times independently. Vecchia is not included, because the goal is
to compare the internal stochastic variability of the two proposed thinning
designs.

Let \(\widehat\theta^{(k,b)}_{M,j}\) be the estimate of parameter \(j\) for
dataset \(k\), thinning replicate \(b\), and method
\(M\in\{\mathrm{Bernoulli},\mathrm{Fixed\ Budget}\}\). For each dataset, we
compute the conditional variance over thinning replicates,
\[
s^2_{M,j}(k)
=
\frac{1}{B-1}
\sum_{b=1}^{B}
\left(
\widehat\theta^{(k,b)}_{M,j}
-
\bar\theta^{(k)}_{M,j}
\right)^2,
\qquad
\bar\theta^{(k)}_{M,j}
=
\frac{1}{B}
\sum_{b=1}^{B}
\widehat\theta^{(k,b)}_{M,j}.
\]
We summarize the fraction of total stochastic-estimator variance attributable
to thinning as
\[
\rho_{M,j}
=
\frac{
N_{\mathrm{sim}}^{-1}\sum_{k=1}^{N_{\mathrm{sim}}} s^2_{M,j}(k)
}{
\mathrm{Var}_{k,b}\left(\widehat\theta^{(k,b)}_{M,j}\right)
}.
\]
Values of \(\rho_{M,j}\) close to zero indicate that the random thinning step
contributes little relative to the overall variability of the estimator.

Table~\ref{tab:study3_internal} reports \(100\rho_{M,j}\). The internal
thinning variability is negligible for the mean parameter. For Bernoulli
thinning it is about \(0.02\%\)--\(0.06\%\), whereas for fixed-budget thinning
it is essentially zero at the scale of the table. The sill parameter shows the
same qualitative behavior: the internal thinning component is very small for
both methods, but is systematically smaller for the fixed-budget design.

The scale parameter is more affected by the thinning mechanism. Across the six
scenarios, the internal thinning component accounts for about \(3\%\)--\(9\%\)
of the total stochastic-estimator variance. The two thinning schemes are very
similar for this parameter, although fixed-budget thinning is slightly smaller
in most scenarios. The smoothness parameter is the most sensitive parameter:
the internal thinning component accounts for about \(28\%\)--\(51\%\) of the
total stochastic-estimator variance. Thus, the additional Monte Carlo
variability induced by thinning is concentrated mainly on smoothness
estimation.

\begin{table}[t!]
\centering
\caption{
Simulation Study~3: internal variability of stochastic thinning at
\(K_{\mathrm{tar}}/n=2\). Entries are \(100\rho_{M,j}\), where
\(\rho_{M,j}\) is the fraction of total stochastic-estimator variance
attributable to the random thinning step. Results are based on 50 datasets and
30 independent thinnings per dataset for each scenario.
}
\label{tab:study3_internal}
\small
\begin{tabular}{rrlrrrr}
\toprule
\(n\) & \(\nu\) & Method &
\(100\rho_\mu\) &
\(100\rho_\alpha\) &
\(100\rho_{\sigma^2}\) &
\(100\rho_\nu\) \\
\midrule
100000 & 0.5 & Bernoulli    & 0.026 & 2.96 & 0.081 & 38.48 \\
100000 & 0.5 & Fixed Budget & 0.001 & 2.92 & 0.005 & 40.04 \\
\addlinespace
100000 & 1.5 & Bernoulli    & 0.028 & 7.75 & 0.095 & 28.38 \\
100000 & 1.5 & Fixed Budget & 0.000 & 7.78 & 0.003 & 29.40 \\
\addlinespace
250000 & 0.5 & Bernoulli    & 0.026 & 3.22 & 0.097 & 50.87 \\
250000 & 0.5 & Fixed Budget & 0.001 & 3.15 & 0.005 & 49.75 \\
\addlinespace
250000 & 1.5 & Bernoulli    & 0.024 & 9.03 & 0.069 & 35.32 \\
250000 & 1.5 & Fixed Budget & 0.000 & 8.19 & 0.002 & 33.37 \\
\addlinespace
500000 & 0.5 & Bernoulli    & 0.027 & 4.47 & 0.136 & 49.85 \\
500000 & 0.5 & Fixed Budget & 0.001 & 4.39 & 0.009 & 50.11 \\
\addlinespace
500000 & 1.5 & Bernoulli    & 0.056 & 9.23 & 0.122 & 40.37 \\
500000 & 1.5 & Fixed Budget & 0.001 & 8.77 & 0.006 & 39.22 \\
\bottomrule
\end{tabular}
\end{table}

Table~\ref{tab:study3_rmse} reports the corresponding relative RMSEs with
respect to FullNN. At \(K_{\mathrm{tar}}/n=2\), both stochastic estimators are
essentially indistinguishable from FullNN for the mean and sill parameters.
The scale parameter shows only mild inflation, while the main efficiency loss
again concerns the smoothness parameter. Across the six scenarios, the relative
RMSE for smoothness ranges from about 1.18 to 1.38.

\begin{table}[t!]
\centering
\caption{
Simulation Study~3: relative RMSEs of the stochastic estimators with respect to
FullNN at \(K_{\mathrm{tar}}/n=2\). The speedup is computed with respect to
FullNN in the same scenario.
}
\label{tab:study3_rmse}
\small
\begin{tabular}{rrlrrrrr}
\toprule
\(n\) & \(\nu\) & Method & Speedup &
\(r_\mu\) & \(r_\alpha\) & \(r_{\sigma^2}\) & \(r_\nu\) \\
\midrule
100000 & 0.5 & Bernoulli    &  4.84 & 1.00 & 1.02 & 1.00 & 1.31 \\
100000 & 0.5 & Fixed Budget &  4.83 & 1.00 & 1.02 & 1.00 & 1.29 \\
\addlinespace
100000 & 1.5 & Bernoulli    & 22.64 & 1.00 & 1.05 & 1.00 & 1.21 \\
100000 & 1.5 & Fixed Budget & 22.65 & 1.00 & 1.04 & 1.00 & 1.20 \\
\addlinespace
250000 & 0.5 & Bernoulli    &  4.82 & 1.00 & 1.02 & 1.00 & 1.35 \\
250000 & 0.5 & Fixed Budget &  4.82 & 1.00 & 1.01 & 1.00 & 1.37 \\
\addlinespace
250000 & 1.5 & Bernoulli    & 22.44 & 1.00 & 1.05 & 1.00 & 1.21 \\
250000 & 1.5 & Fixed Budget & 22.37 & 1.00 & 1.03 & 1.00 & 1.18 \\
\addlinespace
500000 & 0.5 & Bernoulli    &  4.74 & 1.00 & 1.03 & 1.00 & 1.38 \\
500000 & 0.5 & Fixed Budget &  4.82 & 1.00 & 1.02 & 1.00 & 1.38 \\
\addlinespace
500000 & 1.5 & Bernoulli    & 23.19 & 1.00 & 1.05 & 1.00 & 1.34 \\
500000 & 1.5 & Fixed Budget & 23.44 & 1.00 & 1.06 & 1.00 & 1.34 \\
\bottomrule
\end{tabular}
\end{table}

The comparison between Bernoulli and fixed-budget thinning shows two main
patterns. For the mean and sill parameters, fixed-budget thinning reduces the
internal thinning component relative to Bernoulli thinning. This is consistent
with the exact retained-pair budget and the more regular allocation across
target-specific NN lists. For the scale and smoothness parameters, however, the
two thinning schemes behave similarly. In particular, exact budget control does
not remove the main source of additional variability in smoothness estimation,
which appears to be related to which local distances are selected rather than
only to how many pairs are retained.

Overall, Study~3 reinforces the conclusions of the previous studies. At the
calibrated budget \(K_{\mathrm{tar}}/n=2\), the additional randomness due to
thinning is negligible for the mean and sill, moderate for the scale parameter,
and most relevant for smoothness. Fixed-budget thinning provides exact control
of the retained-pair count and reduces the internal thinning component for the
mean and sill, but it should not be interpreted as uniformly reducing RMSE or
internal variability for all parameters.

\section{WorldClim temperature application}
\label{sec:application}

We now consider a large gridded temperature application. The goal of this
example is to evaluate the proposed stochastic NN pairwise likelihood in a
realistic multi-million-observation setting and to compare it with a
Vecchia-type benchmark in terms of covariance-fitting time and predictive
accuracy.

The data were obtained from the WorldClim v2.1 monthly climate grids
(\href{https://www.worldclim.org/data/worldclim21.html}{WorldClim v2.1}).
We analyze July average temperature over a western--central United States
window,
$
[-125,-100]\times[30,50],
$
in longitude--latitude coordinates. From the valid grid cells in this window we
used a random sample of \(2{,}500{,}000\) observations. Temperature was analyzed
on its original Celsius scale, without logarithmic transformation.

Longitude--latitude coordinates were converted to projected coordinates in
kilometers using a local equirectangular projection centered on the selected
window. The projected coordinates were used only to define distances and
nearest-neighbor sets. A random 90\%--10\% split was then used, giving
$
n_{\mathrm{train}}=2{,}250{,}000
$
observations for model fitting and \(250{,}000\) observations for prediction
assessment.

Unlike the simulation studies, the temperature application contains a
pronounced large-scale spatial trend. We therefore fitted a smooth
two-dimensional GAM mean surface to the training data, using a basis dimension
\(k=500\). The fitted GAM mean explained about \(89\%\) of the marginal
variation and was kept fixed in the covariance-estimation step. The residual
field was modelled as a stationary isotropic Mat\'ern Gaussian random field
with zero nugget. The pairwise likelihood was fitted to the original training
responses using the GAM fitted values as a spatially varying mean. For
prediction, the GAM mean surface was evaluated at the test locations and local
kriging was applied using 150 nearest training observations.

Motivated by the budget-calibration results in Section~\ref{sec:sim_budget},
we considered retained-pair budgets
\[
K_{\mathrm{tar}}/n_{\mathrm{train}}\in\{1,2\}.
\]
The value \(K_{\mathrm{tar}}/n_{\mathrm{train}}=2\) is the empirically
calibrated budget suggested by the simulation study, whereas
\(K_{\mathrm{tar}}/n_{\mathrm{train}}=1\) is included to assess whether a more
aggressive thinning level is sufficient for prediction in this large gridded
application. We considered both thinning designs of
Section~\ref{sec:two_designs} and two candidate NN graph sizes,
\(m\in\{10,50\}\). For a regular directed \(m\)-NN graph, the corresponding
nominal retained fraction is approximately
\[
p \simeq \frac{K_{\mathrm{tar}}/n_{\mathrm{train}}}{m},
\]
and the resulting values of \(p\) are reported in
Table~\ref{tab:app_temperature}. Fixed-budget thinning retains exactly
\(K_{\mathrm{tar}}\) pairs, whereas Bernoulli thinning retains approximately
that number in expectation.

As an external benchmark, we fitted a Vecchia-type Gaussian approximation with
30 conditioning neighbors. The stochastic NN pairwise fits were computed with
\texttt{GeoFit} from \texttt{GeoModels}, which directly maximizes the selected
weighted pairwise composite likelihood without profiling the mean parameters.
The Vecchia benchmark was computed with \texttt{fit\_model} from
\texttt{GpGp}, which uses a grouped Vecchia implementation, profiles the linear
mean parameters during covariance-parameter optimization, and applies Fisher
scoring. The Vecchia approximation was fitted to the GAM residuals with an
intercept. The estimated intercept was \(\widehat\beta=-0.0104\), and this
value was added to the GAM mean surface when computing Vecchia-based
predictions. Apart from this intercept correction, the same local kriging
prediction machinery was used.

Table~\ref{tab:app_temperature} reports the number of retained pairwise
contributions, covariance-fitting time, and predictive accuracy on the test
set. The predictive scores shown are the mean absolute error (MAE), root mean
squared prediction error (RMSPE), and continuous ranked probability score
(CRPS). The fitting times exclude the common train--test preprocessing and the
common GAM mean-surface fit. They should be interpreted as practical
implementation-level timings for the specific \textsf{R} workflows used here,
rather than as an optimizer- or profiling-matched comparison of the underlying
estimating criteria.

\begin{table}[t!]
\centering
\caption{Temperature application based on WorldClim July average temperature.
Predictive performance and covariance-fitting time for the stochastic NN
pairwise estimators and the Vecchia benchmark. For the proposed methods, \(m\)
denotes the NN candidate graph size; for Vecchia, the entry 30 denotes the
number of conditioning neighbors. Fitting times are implementation-level
timings for the specific \textsf{R} workflows used here.}
\label{tab:app_temperature}
\small
\begin{tabular}{lrrrrrrrr}
\toprule
Method & \(K_{\mathrm{tar}}/n\) & \(m\) & \(p\) &
Pairs \((10^6)\) & Fit time (s) & MAE & RMSPE & CRPS \\
\midrule
Bernoulli    & 1 & 10 & 0.10 & 2.251 & 154.6 & 0.10239 & 0.18268 & 0.08807 \\
Fixed Budget & 1 & 10 & 0.10 & 2.250 & 141.0 & 0.10237 & 0.18261 & 0.08789 \\
Bernoulli    & 1 & 50 & 0.02 & 2.250 & 158.8 & 0.10247 & 0.18293 & 0.08870 \\
Fixed Budget & 1 & 50 & 0.02 & 2.250 & 126.0 & 0.10244 & 0.18285 & 0.08848 \\
\addlinespace
Bernoulli    & 2 & 10 & 0.20 & 4.500 & 299.9 & 0.10238 & 0.18263 & 0.08792 \\
Fixed Budget & 2 & 10 & 0.20 & 4.500 & 318.5 & 0.10238 & 0.18265 & 0.08796 \\
Bernoulli    & 2 & 50 & 0.04 & 4.502 & 280.3 & 0.10243 & 0.18280 & 0.08830 \\
Fixed Budget & 2 & 50 & 0.04 & 4.500 & 259.9 & 0.10244 & 0.18285 & 0.08847 \\
\addlinespace
Vecchia      & -- & 30 & --   & --    & 3088.7 & 0.10230 & 0.18232 & 0.08759 \\
\bottomrule
\end{tabular}
\end{table}

The empirical 95\% coverage was close to nominal for all methods. Across the
stochastic NN pairwise fits it ranged from 0.9408 to 0.9444, while Vecchia gave
coverage 0.9399. The corresponding 95\% interval scores were also very similar,
ranging from 1.2078 to 1.2088 for the stochastic fits and equal to 1.2069 for
Vecchia.

Table~\ref{tab:app_temperature_params} reports the fitted Mat\'ern covariance
parameters for the residual field after removal of the GAM mean surface.
For the stochastic NN pairwise fits, standard errors were computed using the
parametric score bootstrap described in Section~\ref{sec:inference}, with
\(100\) bootstrap score evaluations and the spectral turning-bands simulator
proposed in \citealp{newfast}, using \(L=1000\). Standard errors are reported
only for the stochastic NN pairwise fits; Vecchia is used as an external
predictive and computational benchmark.

\begin{table}[t!]
\centering
\caption{Temperature application based on WorldClim July average temperature.
Estimated Mat\'ern covariance parameters for the residual field after fitting
the GAM mean surface. Standard errors for the stochastic NN pairwise fits are
reported in parentheses and were computed using the parametric score bootstrap
with 100 score evaluations and the turning-bands simulator with \(L=1000\).
Standard errors are not reported for the Vecchia benchmark. The nugget
parameter is fixed at zero.}
\label{tab:app_temperature_params}
\small
\begin{tabular}{lrrrrrr}
\toprule
Method & \(K_{\mathrm{tar}}/n\) & \(m\) & \(p\) &
\(\widehat\alpha\) & \(\widehat\sigma^2\) & \(\widehat\nu\) \\
\midrule
Bernoulli    & 1 & 10 & 0.10 &
11.414 {\scriptsize (3.954)} &
2.568 {\scriptsize (0.094)} &
0.787 {\scriptsize (0.169)} \\

Fixed Budget & 1 & 10 & 0.10 &
11.219 {\scriptsize (4.098)} &
2.569 {\scriptsize (0.091)} &
0.794 {\scriptsize (0.168)} \\

Bernoulli    & 1 & 50 & 0.02 &
12.070 {\scriptsize (2.007)} &
2.571 {\scriptsize (0.091)} &
0.764 {\scriptsize (0.095)} \\

Fixed Budget & 1 & 50 & 0.02 &
11.873 {\scriptsize (2.293)} &
2.568 {\scriptsize (0.083)} &
0.771 {\scriptsize (0.114)} \\
\addlinespace

Bernoulli    & 2 & 10 & 0.20 &
11.291 {\scriptsize (3.029)} &
2.566 {\scriptsize (0.090)} &
0.792 {\scriptsize (0.124)} \\

Fixed Budget & 2 & 10 & 0.20 &
11.335 {\scriptsize (3.492)} &
2.568 {\scriptsize (0.083)} &
0.791 {\scriptsize (0.154)} \\

Bernoulli    & 2 & 50 & 0.04 &
11.799 {\scriptsize (2.008)} &
2.569 {\scriptsize (0.086)} &
0.775 {\scriptsize (0.099)} \\

Fixed Budget & 2 & 50 & 0.04 &
11.914 {\scriptsize (2.104)} &
2.568 {\scriptsize (0.099)} &
0.770 {\scriptsize (0.096)} \\
\addlinespace

Vecchia      & -- & 30 & -- &
9.091 &
2.213 &
0.831 \\
\bottomrule
\end{tabular}
\end{table}

The stochastic NN pairwise estimates are stable across thinning designs,
retained-pair budgets, and candidate graph sizes. Across the stochastic fits,
the estimated sill remains essentially unchanged, around \(2.57\), while the
Mat\'ern scale parameter ranges from about \(11.2\) to \(12.1\) km and the
smoothness parameter from about \(0.76\) to \(0.79\). Increasing
\(K_{\mathrm{tar}}/n_{\mathrm{train}}\) from 1 to 2 has little effect on the
fitted covariance parameters, and Bernoulli and fixed-budget thinning lead to
nearly indistinguishable estimates. The Vecchia fit gives a smaller scale and
sill and a slightly larger smoothness, reflecting the fact that it optimizes a
different approximate likelihood criterion and uses a different fitting
workflow.

As a diagnostic for the fitted residual covariance model,
Figure~\ref{fig:app_variogram} compares the empirical semivariogram of the GAM
residuals with the Mat\'ern semivariogram fitted by the Bernoulli stochastic NN
pairwise likelihood with \(K_{\mathrm{tar}}/n_{\mathrm{train}}=1\), \(m=10\),
and \(p=0.10\). This aggressive configuration retained approximately
\(2.251\times10^6\) pairwise contributions. The fitted curve closely follows
the empirical semivariogram over the main range of spatial lags, indicating
that the thinned pairwise fit captures the residual spatial dependence left
after removal of the large-scale GAM mean surface.

\begin{figure}[t!]
\centering
\includegraphics[width=0.72\textwidth]{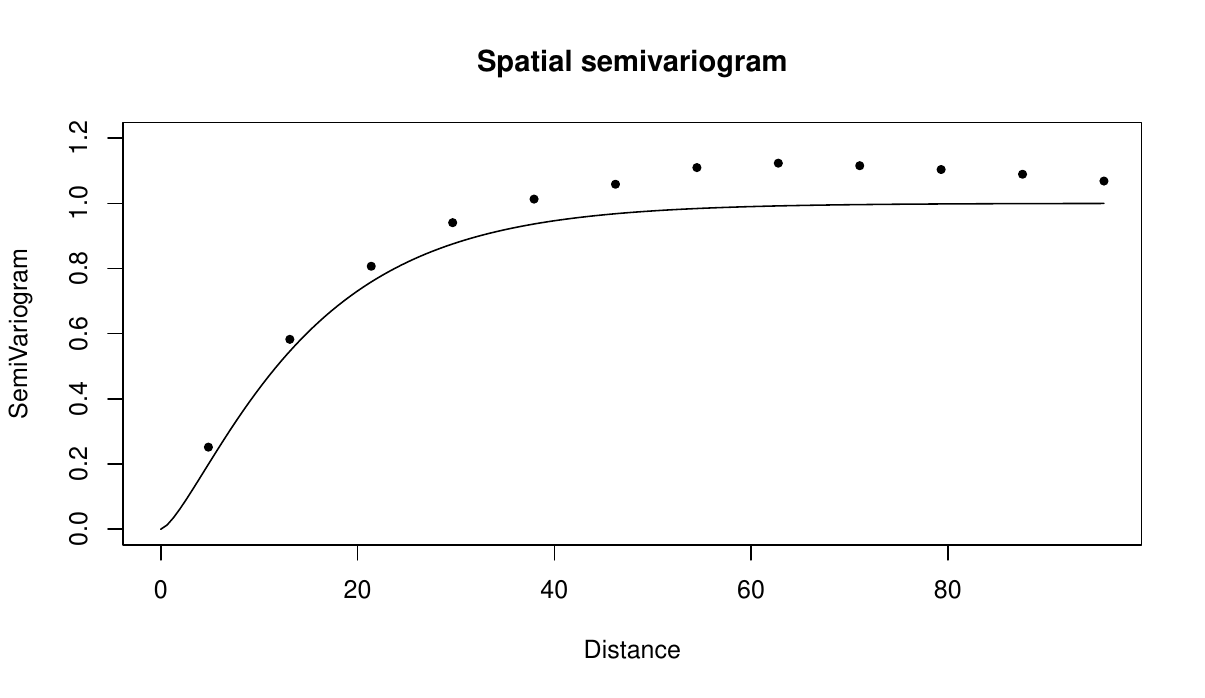}
\caption{Temperature application. Empirical semivariogram of the GAM residuals
and fitted Mat\'ern semivariogram for the Bernoulli stochastic NN pairwise fit
with \(K_{\mathrm{tar}}/n_{\mathrm{train}}=1\), \(m=10\), and \(p=0.10\).}
\label{fig:app_variogram}
\end{figure}

All stochastic NN pairwise fits produce nearly identical predictive accuracy.
Across the eight stochastic configurations, MAE ranges from 0.10237 to 0.10247,
RMSPE ranges from 0.18261 to 0.18293, and CRPS ranges from 0.08789 to 0.08870.
The Vecchia approximation gives the smallest MAE, RMSPE, and CRPS, but the
gain is very small: relative to the best stochastic configuration, the
reduction is about \(7.2\times10^{-5}\) in MAE and \(2.9\times10^{-4}\) in
RMSPE.

In the implementation-level comparison considered here, the main difference is
computational. The \texttt{GpGp} Vecchia covariance fit required about
3089 seconds, whereas the stochastic NN pairwise fits computed with
\texttt{GeoFit} required between about 126 and 159 seconds when
\(K_{\mathrm{tar}}/n_{\mathrm{train}}=1\), and between about 260 and
318 seconds when \(K_{\mathrm{tar}}/n_{\mathrm{train}}=2\). Thus, for the
specific workflows used here, the stochastic NN pairwise fits required roughly
10--25 times less covariance-fitting time than the Vecchia benchmark.

Overall, the application shows that stochastic NN pairwise likelihood can
deliver essentially the same predictive accuracy as the Vecchia benchmark in a
multi-million-observation setting. In the practical implementation-level
comparison considered here, this accuracy was achieved with substantially
shorter covariance-fitting times. In this large gridded temperature dataset,
the more aggressive budget \(K_{\mathrm{tar}}/n_{\mathrm{train}}=1\) was
already sufficient for prediction.

\section{Concluding remarks}
\label{sec:conclusion}

We have introduced a stochastic acceleration of nearest-neighbor weighted
pairwise composite likelihoods for massive spatial datasets. The proposed
approach separates the construction of a rich deterministic NN candidate graph
from the number of pairwise likelihood terms actually evaluated. Bernoulli
thinning controls this number in expectation, whereas fixed-budget thinning
retains an exact number of pairs by allocating the budget across target-specific
NN lists and sampling without replacement within each list.

The simulation studies suggest that retaining two pairs per observation provides
a stable statistical--computational compromise, in the Mat\'ern settings
considered, when all covariance parameters, including smoothness, are estimated.
The largest loss of efficiency is concentrated on smoothness estimation, while
the mean, scale, and sill parameters remain close to the deterministic NN
pairwise benchmark. The comparison with Vecchia shows that stochastic NN
pairwise likelihood can provide substantial reductions in covariance-fitting
time when a moderate loss of efficiency is acceptable. The internal-variability
study shows that the additional Monte Carlo variability due to thinning is
negligible for the mean and sill, moderate for the scale parameter, and largest
for smoothness.

The empirical budget \(K_{\mathrm{tar}}/n=2\) was calibrated using uniformly
distributed sampling locations on a two-dimensional domain and stationary
Mat\'ern covariance models. Other sampling designs, covariance families, or
forms of nonstationarity may require different retained-pair budgets. In
particular, strongly non-uniform sampling patterns may benefit from adaptive or
locally varying thinning probabilities, and covariance models with different
short-range behavior may require a different balance between the candidate graph
size \(m\) and the thinning parameter \(p\).

In the WorldClim temperature application, the stochastic NN pairwise fits give
predictive scores essentially indistinguishable from the Vecchia benchmark, but
with substantially shorter covariance-fitting times. The application also shows
that the proposed fits can be accompanied by Godambe standard errors computed by
a parametric score bootstrap, without refitting the model for each bootstrap
dataset.

Overall, the proposed stochastic NN pairwise likelihood provides a scalable
alternative for covariance estimation in massive spatial datasets. It is
particularly appealing when the main inferential or predictive targets are
mean, scale, sill, or spatial prediction, and when some loss of efficiency for
smoothness estimation is acceptable in exchange for substantial computational
savings.

Although this paper has focused on Gaussian spatial random fields, the proposed
stochastic thinning strategy only requires a candidate set of pairwise
likelihood contributions. It can therefore be adapted to non-Gaussian random
fields, for which pairwise composite likelihoods are often especially useful
\citep{Heagerty:Lele:1998,Bevilacqua_et_al:2021,poi22}, and to space--time or
multivariate random fields \citep{ppolow,gkleiber}, where the candidate graph
can be constructed from suitable nearest-neighbor pair lists. These extensions
are supported in the \texttt{GeoModels} implementation \citep{GeoModels} and
provide natural directions for future applied work.

\appendix
\section{Asymptotic properties under stochastic thinning}
\label{app:asymptotics}

This appendix states the asymptotic properties of the stochastic NN pairwise
estimators. The results extend the standard increasing-domain theory for NN
weighted pairwise composite likelihoods to the case where the NN weights are
multiplied by thinning indicators generated independently of the observed
field. The purpose is to isolate the additional role of the stochastic thinning
design, rather than to reproduce the full proof of the general weighted
pairwise likelihood theory.

Consistency, asymptotic normality, and Godambe covariance formulas for local
weighted pairwise likelihood estimators of spatial Gaussian random fields have
already been established under increasing-domain weak-dependence conditions;
see, for example, \citet{Bevilacqua:Gaetan:2015}. Conditional on the realized
thinning indicators, the stochastic criterion considered here is again a local
weighted pairwise likelihood, with random but design-measurable weights. For
notational simplicity, and consistently with the main text, the appendix is
written for the bivariate marginal pairwise contribution
\(\ell(Z_i,Z_j;\theta)\). The same arguments apply to other local pairwise
composite likelihood contributions, provided the corresponding regularity
conditions hold.

Let
\[
\mathcal E_n(m)=\{(i,j):w^{NN}_{ij}(m)=1,\ i\ne j\}
\]
be the deterministic directed NN candidate graph, and let
\(d_n=|\mathcal E_n(m)|\). The stochastic pairwise criterion is
\[
\widetilde{wpl}_{n}(\theta)
=
\sum_{(i,j)\in\mathcal E_n(m)}
X_{ij}\ell(Z_i,Z_j;\theta),
\]
where \(X_{ij}\in\{0,1\}\) are thinning indicators independent of the random
field. Equivalently, the stochastic weights are
\[
\widetilde w_{ij}=w^{NN}_{ij}(m)X_{ij}.
\]
Multiplication of \(\widetilde{wpl}_{n}(\theta)\) by a positive deterministic
normalizing constant, such as \((pd_n)^{-1}\) for Bernoulli thinning or
\(K_{\mathrm{tar}}^{-1}\) for fixed-budget thinning, does not change the
maximizer. It is nevertheless useful for stating limiting results.

Define the stochastic composite score and sensitivity matrix as
\[
\widetilde U_{n}(\theta)
=
\sum_{(i,j)\in\mathcal E_n(m)}
X_{ij}s_{ij}(\theta),
\qquad
s_{ij}(\theta)
=
\nabla_\theta \ell(Z_i,Z_j;\theta),
\]
and
\[
\widetilde H_{n}(\theta)
=
\sum_{(i,j)\in\mathcal E_n(m)}
X_{ij}h_{ij}(\theta),
\qquad
h_{ij}(\theta)
=
-\nabla^2_\theta \ell(Z_i,Z_j;\theta).
\]
Let
\[
K_n=\sum_{(i,j)\in\mathcal E_n(m)}X_{ij}
\]
be the number of retained pairs. For Bernoulli thinning,
\(K_n\) is random and \(E_X(K_n)=pd_n\). For fixed-budget thinning,
\(K_n=K_{\mathrm{tar}}\) exactly.

\begin{proposition}[Consistency and asymptotic normality]
\label{prop:asymptotics_thinning}
Assume that the spatial random field is correctly specified and satisfies the
standard increasing-domain weak-dependence and regularity conditions for local
weighted pairwise likelihood estimation. Assume also that the thinning design is
generated independently of the observed field, that the retained-pair size
satisfies \(K_n\to\infty\) and \(K_n=O(n)\), and that the normalized sensitivity
matrix is nonsingular in a neighborhood of the true parameter \(\theta_0\).
Then any sequence of maximizers \(\widehat\theta_n\) of the normalized
stochastic NN pairwise criterion is consistent for \(\theta_0\). Moreover,
\[
K_n^{1/2}
(\widehat\theta_n-\theta_0)
\Rightarrow
N\{0,H^{-1}JH^{-1}\},
\]
where
\[
H
=
\lim_{n\to\infty}
K_n^{-1}
E\{\widetilde H_{n}(\theta_0)\},
\qquad
J
=
\lim_{n\to\infty}
K_n^{-1}
\operatorname{Var}\{\widetilde U_{n}(\theta_0)\}.
\]
Here the expectation and variance are taken over both the random field and the
thinning design. Conditional versions of \(H\) and \(J\), given the realized
thinned edge set, are obtained by conditioning on the thinning indicators.
\end{proposition}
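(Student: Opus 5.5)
The argument conditions on the thinning indicators $X=\{X_{ij}\}$. Because $X$ is independent of $Z$, the conditional problem is a deterministic local weighted pairwise likelihood with weights $\widetilde w_{ij}=w^{NN}_{ij}(m)X_{ij}\in\{0,1\}$ supported on NN pairs. It is therefore covered by the increasing-domain theory of \citet{Bevilacqua:Gaetan:2015}. I then integrate over the design, which is where the unconditional $H$ and $J$ enter.

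\emph{Consistency.} I work with the normalized criterion $K_n^{-1}\widetilde{wpl}_n(\theta)$. Conditional on $X$, it is an average of $K_n$ terms $\ell(Z_i,Z_j;\theta)$ indexed by pairs at bounded NN distance. Under the weak-dependence (mixing) and moment conditions, each location enters at most $O(m)$ retained pairs, and $m$ is fixed. A law of large numbers for mixing triangular arrays then gives
\[
K_n^{-1}\widetilde{wpl}_n(\theta)-K_n^{-1}E_Z\{\widetilde{wpl}_n(\theta)\mid X\}\to 0
\]
uniformly on compact $\Theta$. The constants in this bound depend only on the mixing coefficients and on $m$, not on $X$, so the convergence is uniform in the thinning realization.

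The centering $K_n^{-1}\sum X_{ij}E\ell(Z_{ij};\theta)$ is a weighted average of bivariate Kullback--Leibler type criteria. Each of these is maximized at $\theta_0$ because the model is correctly specified. Identifiability follows from the assumed nonsingularity of the normalized sensitivity near $\theta_0$. For the Bernoulli design I also need $K_n/(pd_n)\to 1$ in probability, which holds by Chebyshev because $\Var_X(K_n)\le pd_n$. The standard argmax argument then yields consistency, first conditionally on $X$ and then unconditionally by dominated convergence.

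\emph{Asymptotic normality.} I Taylor-expand $\widetilde U_n(\widehat\theta_n)=0$ around $\theta_0$:
\[
K_n^{1/2}(\widehat\theta_n-\theta_0)=\{K_n^{-1}\widetilde H_n(\bar\theta)\}^{-1}K_n^{-1/2}\widetilde U_n(\theta_0).
\]
The Hessian factor converges to $H$ by the same uniform LLN combined with consistency. For the Bernoulli design, $K_n^{-1}E_Z\widetilde H_n$ also converges to its design mean, because independent $X_{ij}$ with bounded summands satisfy a LLN. For the fixed-budget design, the same holds because indicators within a stratum are negatively correlated, as shown in \eqref{eq:fixed_budget_cov_target}, so the variance is at most the independent case.

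For the score, I use the decomposition
\[
\widetilde U_n(\theta_0)=\sum X_{ij}s_{ij}(\theta_0).
\]
The summands are mean zero given $X$, since $E s_{ij}(\theta_0)=0$ for each pair. By the law of total variance,
\[
\Var\{\widetilde U_n\}=E_X\Var_Z(\widetilde U_n\mid X),
\]
because $E_Z(\widetilde U_n\mid X)=0$. The joint array of pairs $(X_{ij},Z_{ij})$ is again weakly dependent in space. For Bernoulli thinning the $X$'s are independent, so the array is mixing with the same coefficients. For fixed-budget thinning, dependence among the $X$'s is confined to a single target's list, plus a negligible global rounding constraint. A CLT for $\alpha$-mixing random fields, or a Stein/dependency-neighborhood CLT for locally dependent arrays, applied to the $K_n^{1/2}$-normalized sum then yields $N(0,J)$. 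Slutsky's lemma completes the sandwich result. The conditional versions follow by stopping before the integration over $X$.

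\emph{Main obstacle.} The hardest step is the CLT for the score under fixed-budget thinning. The randomized rounding of the budgets $k_j$, which enforces $\sum k_j=K_{\mathrm{tar}}$, couples all targets, so the array is not locally dependent. I would handle this by conditioning on $(k_1,\dots,k_n)$. Given the budgets, the within-list samples are independent across $j$ and have only local dependence. The coupled allocation affects only which targets receive the extra $\lceil\cdot\rceil$ versus $\lfloor\cdot\rfloor$ share.

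I would then show that the conditional variance $K_n^{-1}\Var_Z(\widetilde U_n\mid X)$ concentrates around $J$. This reduces to bounding the covariance of products $X_eX_f$ across distant edges. Those covariances are $O(1/n)$ under the fractional-part allocation, so after summing against the summable field covariances they contribute $o(1)$. Since $K_n=O(n)$ and $m$ is fixed, these negligibility bounds hold at the required rate.
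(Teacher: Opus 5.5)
Your proposal follows the same route as the paper's sketch. You condition on the thinning indicators so that the criterion becomes a local weighted pairwise likelihood covered by the increasing-domain theory, then use a uniform law of large numbers with the argmax argument for consistency, and a score expansion with a spatial CLT and Slutsky for the sandwich limit. Your additions fill in details the paper leaves implicit: the identity $\Var\{\widetilde U_n(\theta_0)\}=E_X\Var_Z\{\widetilde U_n(\theta_0)\mid X\}$, the Chebyshev bound for $K_n/(pd_n)$, and conditioning on the budgets $k_j$ to handle the rounding coupling.

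There are two small points to fix. First, nonsingularity of the sensitivity near $\theta_0$ gives only local identifiability, so the global uniqueness needed in the argmax step must come from the standard regularity conditions rather than from that assumption. Second, your $O(1/n)$ bound on cross-target covariances depends on how the leftover budget is randomized, and the paper does not specify this; the issue is moot at the calibrated budget $K_{\mathrm{tar}}/n=2$, where $pm=2$ and $k_j=2$ deterministically.
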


\begin{proof}[Sketch of proof]
Conditional on the thinning indicators, the stochastic criterion is a local
weighted pairwise composite likelihood with weights
\(\widetilde w_{ij}=w^{NN}_{ij}(m)X_{ij}\). The indicators are generated
independently of the observed field and depend only on the design, so the
selected weighted criterion remains a sum of local pairwise contributions.

Under the standard increasing-domain weak-dependence assumptions, the
normalized stochastic objective satisfies a uniform law of large numbers in a
neighborhood of \(\theta_0\). Since the pairwise margins are correctly
specified, the population criterion is maximized at \(\theta_0\), up to the
deterministic weighting induced by the NN graph and the thinning design.
Consistency follows from the usual argmax argument.

For asymptotic normality, expand the stochastic score around \(\theta_0\):
\[
0
=
\widetilde U_{n}(\widehat\theta_n)
=
\widetilde U_{n}(\theta_0)
-
\widetilde H_{n}(\bar\theta_n)
(\widehat\theta_n-\theta_0),
\]
where \(\bar\theta_n\) lies between \(\widehat\theta_n\) and \(\theta_0\).
A central limit theorem for local spatial sums gives
\[
K_n^{-1/2}\widetilde U_{n}(\theta_0)
\Rightarrow
N(0,J),
\]
and the normalized sensitivity satisfies
\[
K_n^{-1}\widetilde H_{n}(\bar\theta_n)
\overset{p}{\longrightarrow}
H.
\]
Combining these two limits gives the stated sandwich covariance
\(H^{-1}JH^{-1}\).
\end{proof}

For Bernoulli thinning, \(K_n\) is random. Under the retained-budget condition
\(pd_n\to\infty\), \(K_n/(pd_n)\to 1\) in probability, so normalizations based
on \(K_n\) and \(pd_n\) are asymptotically equivalent. For fixed-budget
thinning, \(K_n=K_{\mathrm{tar}}\) exactly. Since the retained-pair budgets used
in this paper are proportional to \(n\), the \(K_n^{1/2}\) and \(n^{1/2}\)
normalizations are equivalent up to a constant factor. The \(K_n^{1/2}\)
normalization is convenient here because the stochastic objectives are indexed
directly by the number of evaluated pairwise likelihood contributions.

The two thinning designs differ in how they affect the design variability. In
the constant-probability Bernoulli design,
\[
X_{ij}\stackrel{\mathrm{ind}}{\sim}\mathrm{Bernoulli}(p),
\qquad (i,j)\in\mathcal E_n(m),
\]
so that the expected thinned criterion is a positive multiple of the
deterministic NN pairwise criterion:
\[
E_X\{\widetilde{wpl}_{n}(\theta)\}
=
p
\sum_{(i,j)\in\mathcal E_n(m)}
\ell(Z_i,Z_j;\theta).
\]
Under correct specification, this multiplicative factor does not change the
population maximizer.

For fixed-budget thinning, \(K_n=K_{\mathrm{tar}}\) exactly and the sampling is
performed without replacement within target-specific NN lists. Conditional on
the local budget \(k_j\), any edge \(e=(i,j)\in\mathcal E_{n,j}(m)\) has
inclusion probability \(k_j/d_{n,j}\), and two distinct edges in the same list
are negatively correlated:
\[
\operatorname{Cov}_X(X_e,X_f\mid k_j)
=
-\frac{k_j(d_{n,j}-k_j)}
       {d_{n,j}^2(d_{n,j}-1)}
\le 0.
\]
Thus, fixed-budget thinning differs from Bernoulli thinning not only because it
fixes the retained size exactly, but also because it induces local negative
dependence among candidate pairs sharing the same target. This provides exact
computational control and a more regular allocation of retained pairs across
the NN graph, without implying uniform efficiency dominance over Bernoulli
thinning.

The Godambe covariance in Proposition~\ref{prop:asymptotics_thinning} is
defined under the unconditional-design interpretation, where variability is
taken over both the random field and the thinning design. This is the
interpretation used by the parametric score bootstrap in this paper, because
the thinning design is regenerated independently for each bootstrap dataset. A
conditional-design version is also possible: in that case, the realized thinned
edge set is kept fixed and \(J\) describes variability with respect to the
random field only.
\bibliographystyle{ECA_jasa}
\bibliography{references}

\end{document}